\newcommand{\mycomment}[1]{}
\newcommand{\R}{\ensuremath{\Bbb{R}}}
\newcommand{\N}{\ensuremath{\Bbb{N}}}
\newcommand{\E}{\ensuremath{\Bbb{E}}}
\newcommand{\Var}{\ensuremath{\mathrm{Var}}}
\newcommand{\Cor}{\ensuremath{\mathrm{Cor}}}
\newcommand{\Cov}{\ensuremath{\mathrm{Cov}}}
\newcommand{\diag}{\ensuremath{\mathrm{diag}}}
\newcommand{\leb}{\ensuremath{\mathrm{Leb}}}
\def\Levy{L\'{e}vy }
\newtheorem{theorem}{Theorem}
\newtheorem{algorithm}[theorem]{Algorithm}
\newtheorem{corollary}{Corollary}
\newtheorem{definition}[theorem]{Definition}
\newtheorem{example}{Example}
\newtheorem{proposition}{Proposition}
\newtheorem{remark}{Remark}
\begin{document}
\title{Modelling, simulation and inference for multivariate time series of counts}
\author{\textsc{Almut E.~D.~Veraart} \\
\textit{Department of Mathematics, Imperial College London}\\
\textit{ 180 Queen's Gate, 
 London, SW7 2AZ, 
UK}  \\
\texttt{a.veraart@imperial.ac.uk}
}
\maketitle

\begin{abstract}
This article presents a new continuous-time modelling framework  
for multivariate time series of counts which have an   infinitely divisible marginal distribution. The model is based on a  mixed moving average process driven by \Levy noise -- called a trawl process -- 
where the serial  correlation and the cross-sectional dependence are modelled independently of each other. Such processes can exhibit  short or long memory.
We derive a stochastic simulation algorithm  and a   statistical inference method for such processes. The new methodology is then applied to  high frequency financial data, where we investigate the relationship between the number of limit order submissions and deletions in a limit order book.

\end{abstract}

 \noindent{\bf Keywords:} 
 Count data, continuous time modelling of multivariate time series, trawl processes, infinitely divisible, Poisson mixtures,  multivariate negative binomial law, limit order book\\ 
\noindent {\bf Mathematics Subject Classification:}  60G10,  60G55, 	60E07, 62M10, 62P05
\maketitle{}

\section{Introduction}
Time series of counts can be viewed as realisations of non-negative integer-valued stochastic processes and arise in various applications in the natural,   life and  social sciences. As such there has been very active research in the various fields and 
recent textbooks treatments can be found in 
 \cite{CameronTrivedi1998,KedemFokianos2002, Winkelmann2003, DHLR2015} and we refer to \cite{Davisetal1999,McKenzie2003, Ferlandetal2006,Weiss2008,CuiLund2009,DavisWu2009,JungTremayne2011} for recent surveys and some new developments of the literature.

However, most of these previous works focus on univariate time series of counts and the literature on multivariate extensions is rather sparse and almost exclusively deals with models formulated in discrete time and borrow ideas from traditional autoregressive time series models. E.g.~\cite{FrankeRao1995}  and \cite{Latour1997} introduced the first-order integer-valued autoregression model, which is based on the generalised Steutel and van Harn (1979) thinning  operator.  Recently, \cite{BoudreaultCharpentier2011} applied such models to  earthquake counts.
Also, the recent handbook on discrete-valued time series by \cite{DHLR2015} contains the   chapter by  \cite{Karlis2015} who surveys recent developments in multivariate count time series models. 

One challenge in handling multivariate time series is the  modelling of  the cross-sectional dependence. While for continuous distributions the theory of copulas presents a powerful toolbox, it has been pointed out by \cite{GenestNeslehova2007} that a problem arises in the discrete context due to the non-uniqueness of the associated copula. This can be addressed by using the continuous extension approach by \cite{DenuitLambert2005}. Indeed, for instance \cite{HeinenRengifo2007} introduce a multivariate time series model for counts based on copulas applied to continuously extended discrete random variables and fit the model to the numbers of trades of various assets at the New York stock exchange. Also, \cite{KLL2015} study discrete copula distributions with time-varying marginals and dependence structure in financial econometrics. 
Motivated by the reliability literature, \cite{LindskogMcNeil2003}  introduced the so-called common Poisson shock model to describe the arrival of  insurance claims in multiple locations or losses due to credit defaults of various types of counterparty.

While the models mentioned above are interesting in their own right, the goal of this article is more ambitious since it formulates a more  general modelling framework which can handle a variety of marginal distributions as well as different types of serial dependence including, in particular, both short and long memory specifications. That said, motivated by an application in financial econometrics and recognising the success the class of \Levy processes has in such settings, we focus exclusively on models whose marginal distribution is infinitely divisible. This assumption puts  a restriction on the cross-sectional dependence due to the well-known result by \cite{Feller1968}, which says that a random vector with infinitely divisible distribution on $\mathbb{N}^n$ always has  non-negatively correlated components. Moreover, any non-degenerate distribution on $\mathbb{N}^n$ is infinitely divisible  if and only if it can be expressed as a discrete  compound Poisson distribution. We will see that this is nevertheless a very rich class of distributions and suitable for our application to high frequency financial data.

The new modelling framework  is based on so-called multivariate integer-valued trawl processes, which are special cases of multivariate mixed moving average processes where the driving noise is given by an integer-valued \Levy basis. 

In the univariate case, trawl processes -- not necessarily restricted to the integer-valued case -- have been introduced  by  \cite{BN2011}. Also,   \cite{NVG2015} used such processes in an hierarchical model in the context of extreme value theory. The univariate integer-valued case has been developed in detail in  \cite{BNLSV2014}.  \cite{ShephardYang2016a} studied likelihood inference  for a particular subclass of an integer-valued trawl process and, more recently, \cite{ShephardYang2016b} used such processes to build an econometric model for fleeting discrete price moves.
While the multivariate extension was already briefly mentioned in  \cite{BNLSV2014}, this article develops the theory of multivariate integer-valued trawl (MVIT) processes  in detail and presents new methodology for stochastic simulation and statistical inference for such processes and applies the new results to high frequency financial data from a limit order book.
The key feature of MIVT processes, which makes them powerful for a wide range of applications is the fact that the serial dependence and the marginal distribution can be modelled independently of each other, which is  for instance not the case in the famous DARMA models, see \cite{JacobsLewis1978a,JacobsLewis1978b}. As such we will present parsimonious ways of parameterising the serial correlation and will show that we can accommodate both short and long memory processes as well as seasonal fluctuations.
Moreover, since MITV processes are formulated in continuous time, we can handle both asynchronous and not necessarily equally spaced observations, which is particularly important in a multivariate set-up.

The motivation for this study comes from high frequency financial econometrics where discrete data arise in a variety of scenarios, e.g.~high frequent price moves for stocks with fixed tick size resemble step functions supported on a fixed grid. Also, the number of trades can give us an indication of market activity and is widely analysed in the industry. In this article, we will apply  our new methodology  to model the relationship between the number of submitted and deleted limit orders in a limit order book, which are key quantities in high frequency trading.

The outline of this article is as follows. 
Section \ref{Section:Model} introduces the class of multivariate integer-valued trawl processes and presents its probabilistic properties.
Section \ref{Section:crosssection} gives a detailed overview of parametric model specifications focusing on a variety of different cases for modelling the serial correlation. Moreover, we present relevant examples of multivariate  marginal distributions which fall into the infinitely divisible framework. In particular, as pointed out by  \cite{NKarlis2008}, the negative binomial distribution often appears to be a suitable candidate for various applications. Hence we will derive several approaches to defining a multivariate infinitely divisible distribution which allows for univariate negative binomial marginal law.
In Section \ref{Section:SimInf} we will derive an algorithm to simulate from MIVT processes and develop a statistical inference methodology which we will also test in a simulation study.
Section \ref{Section:Empirics} applies the new methodology to limit order book data.
Finally, Section \ref{Section:conclusion} concludes. 
The proofs of the theoretical results are relegated to the Appendix, Section \ref{proofs}, and Section \ref{sim} provides  more details on the  algorithms used in the simulation study.

\section{Multivariate integer-valued trawl processes}\label{Section:Model}
\subsection{Integer-valued \Levy bases as driving noise}
Throughout the paper, we denote by $(\Omega, \mathcal{F}, (\mathcal{F}_t), P)$ the underlying filtered probability space satisfying the usual conditions.
Also, we choose a set  $E \subset \R^d$ ($d\in \N$)  and let the corresponding Borel $\sigma$-algebra  be denoted by  $\mathcal{E}=\mathcal{B}(E)$. 
Next we define a Radon measure  $\mu$  on $(E, \mathcal{E})$,  which by definition satisfies $\mu(B)< \infty$ for every compact measurable set $B \in \mathcal{E}$.

In the following, we will always assume that the Assumption (A1) stated below holds.
\begin{description}
\item[Assumption (A1)] Let  $E=\R^n\times [0,1]\times \R$ for $n\in \N$ and let  $N$ be a homogeneous Poisson random measure on $E$  with intensity measure
$\mu(d{\bf y}, dx, ds) =\E(N (d{\bf y}, dx, ds)) = \nu(d{\bf y}) dx ds$,
where $\nu$ is a \Levy measure concentrated on  $\mathbb{Z}^n\setminus \{{\bf 0}\}$ and satisfying\\
$\int_{\mathbb{R}^n} \min(1, ||{\bf y}||) \nu(d{\bf y}) < \infty$.
\end{description}
Using the Poisson random measure, we can  define an integer-valued \Levy basis as follows.

\begin{definition}\label{DefIVLevyBasis}
 Suppose that   $N$ is a homogeneous Poisson random measure on $(E, \mathcal{E})$ satisfying Assumption (A1).
An \emph{$\mathbb{Z}^n$-valued, homogeneous \Levy basis} on $([0,1]\times\mathbb{R}, \mathcal{B}([0,1]\times\mathbb{R}))$ is defined as
\begin{align}\label{IVL}
{\bf L}(dx, ds) =(L^{(1)}(dx, ds), \dots, L^{(n)}(dx, ds))^{\top}= \int_{\R^n} {\bf y} N(d{\bf y}, dx, ds).
\end{align}
\end{definition}

From the definition, we can immediately see that ${\bf L}$ is infinitely divisible with characteristic function given by 
\begin{align*}
\mathbb{E}(\exp(i \boldsymbol{ \theta}^{\top} {\bf L}(dx, ds)))=\exp(\text{C}_{{\bf L}(dx,ds)}(\boldsymbol{ \theta})), \quad \boldsymbol{ \theta} \in \R^n.
\end{align*}
Here,  $\text{C}$ denotes the associated 
cumulant function, which is the (distinguished) logarithm of the characteristic function. It can we written as   
\begin{align*}
\text{C}_{{\bf L}(dx, ds)}(\boldsymbol{ \theta}) = \text{C}_{{\bf L}'}(\boldsymbol{ \theta} )dx ds,
\end{align*}
where the random vector ${\bf L}'$ denotes the corresponding \emph{\Levy seed} with cumulant function given by 
\begin{align}\label{Mult-LK}
\text{C}_{{\bf L}'}(\boldsymbol{ \theta} ) =  \int_{\mathbb{R}^n}\left(e^{i\boldsymbol{\theta}^{\top} {\bf y}}-1\right) \nu(d{\bf y}),
\end{align}
where $\nu$ denotes the corresponding \Levy measure defined above.

\begin{remark}
It is important to note that the \Levy seed specifies the  homogeneous \Levy basis uniquely, and vice versa, with any homogeneous \Levy basis we can associate a unique \Levy seed.
Hence, in modelling terms, it will later be sufficient to discuss various modelling choices for the corresponding \Levy seed, since this will fully characterise the associated \Levy basis.  
\end{remark}
\begin{remark}
Based on the \Levy seed, we can define a \Levy process denoted by
$({\bf L}'_t)_{t\geq 0}$, when setting  ${\bf L}_1' = {\bf L}'$. Clearly, in this case, we get  $\text{C}_{{\bf L}'_t}(\boldsymbol{ \theta}) = t \text{C}_{{\bf L}'}(\boldsymbol{ \theta})$. 
\end{remark}

 Following the construction in \citet[Theorem 4.3]{Sato1999}, we model the \Levy seed by an $n$-dimensional compound Poisson random variable  given by 
  \begin{align*}
  {\bf L}'=\sum_{j=1}^{N_1}{\bf Z}_j,
  \end{align*}
  where $N=(N_t)_{t\geq 0}$ is an homogeneous Poisson process of rate $v>0$ and the $({\bf Z}_j)_{j\in \mathbb{N}}$ form a sequence of i.i.d.~random variables independent of $N$ and  which have no atom in ${\bf 0}$, i.e.~not all components are simultaneously equal to zero, more precisely, $\mathbb{P}({\bf Z}_j={\bf 0}) = 0$ for all $j$.

    \begin{remark}
  Recall that by modelling the \Levy seed by a multivariate compound Poisson process we can only allow for positive correlations between the components.
  \end{remark}

\subsection{The trawls}
Following the approach presented in \cite{BN2011}, see also \cite{BNLSV2014}, we now define the so-called \emph{trawls}.

\begin{definition}\label{TrawlDef}
We call  a Borel set $A\subset  [0,1]\times (-\infty, 0]$ such that $\leb(A) < \infty$
a  \emph{trawl}. Further, we set
\begin{align}\label{trawl}
A_t = A + (0,t), \quad t \in \mathbb{R}.
\end{align}
\end{definition}
The above definition implies that the trawl at time $t$ is just the shifted trawl from time $0$.

\begin{remark}
Note that the size of the trawl does not change over time, i.e.~we have
$\leb(A_t)=\leb(A)$ for all $t$.
\end{remark}

Clearly, there is a wide class of sets which can be considered as trawls. Throughout the paper, we will hence narrow down our focus, and will concentrate  on a particular subclass of trawls which can be written as 
\begin{align}\label{StTrawl}
A = \{(x, s): s \leq 0,\ 0\leq x \leq  d(s)\},
\end{align}
where $ d:(-\infty, 0] \mapsto [0,1]$ is a continuous function such that $\leb(A) < \infty$. Typically we refer to $d$ as the \emph{trawl function}.
In such a semi-parametric setting, we can easily deduce that 
\begin{align}\label{dstar}
 \leb(A) =  \int_{-\infty}^0 d(s) ds.
\end{align}
Moreover, the corresponding trawl at time $t$ is given by 
\begin{align*}
A_t = A + (0,t)= \{(x, s): s \leq t,\ 0\leq x \leq  d(s-t)\}.
\end{align*}

\begin{definition}
Let $A$ denote a trawl  given by \eqref{StTrawl}.
If $d(0)=1$ and $d$ is monotonically non-decreasing, then we call $A$ a \emph{monotonic trawl}.
\end{definition}

\begin{example}
\label{Ex-Ex0}
Let   $d(s) = \exp(\lambda s)$ for $\lambda > 0, s \leq 0$. Then the corresponding trawl
is monotonic with 
$A_t = A + (0,t)= \{(x, s): s \leq t,\ 0\leq x \leq  \exp(\lambda(s-t))\}$.
\end{example}

In our multivariate framework, we will choose $n$ trawls denoted by  
 $A^{(i)}= A_0^{(i)}$. Then we set $A^{(i)}_t = A^{(i)} + (0,t)$ for $i \in \{1, \dots, n\}$.
When we work with trawls of the type \eqref{StTrawl}, we will denote by $d^{(i)}$ the corresponding trawl functions.

\subsection{The multivariate integer-valued trawl process and its properties}

\begin{definition}
The stationary multivariate integer-valued trawl (MIVT) process is defined by 
\begin{align*}
{\bf Y}_t = \left(L^{(1)}(A_t^{(1)}), \dots, L^{(n)}(A_t^{(n)})\right)^{\top},\quad t\in \mathbb{R},
\end{align*}
where  each component is given by 
\begin{align*}
Y^{(i)}_t=  L^{(i)}(A_t^{(i)})=\int_{[0,1]\times \R}\mathbf{I}_{A^{(i)}}(x,s-t)L^{(i)}(dx,ds), \quad i \in \{1, \dots, n\}, 
\end{align*}
where $\mathbf{I}$ denotes the indicator function.
\end{definition}
Since the trawls have finite Lebesgue measure, the integrals above are well-defined in the sense of \cite{RajRos89}. 

When we define $\mathbf{I}_{\bf{A}}(x,s-t)=\diag(\mathbf{I}_{A^{(1)}}(x,s-t), \dots,\mathbf{I}_{A^{(n)}}(x,s-t))$, then we can represent the MIVT process as 
\begin{align*}
{\bf Y}_t = \int_{\R^n\times [0,1]\times \R} {\bf y} \mathbf{I}_{\bf{A}}(x,s-t)N(d{\bf y}, dx,ds), \quad t \in \mathbb{R},
\end{align*}
which shows that we are dealing with a  special case of a multivariate mixed moving average process. 

The law of the MIVT process is fully characterised by its characteristic function, which we shall present next.
\begin{proposition}\label{CharFctTrawl}
For any ${\boldsymbol \theta} \in \R^n$, the characteristic function of ${\bf Y}_t$ is given by $\mathbb{E}(\exp(i{\boldsymbol \theta}^{\top}{\bf Y}_t))=\exp(C_{{\bf Y}_t}({\boldsymbol \theta}))$, where the corresponding cumulant function is given by 
\begin{multline*}
C_{{\bf Y}_t}({\boldsymbol \theta})=
\sum_{k=1}^n \sum_{\substack{1\leq i_1, \dots, i_k \leq n: \\ i_{\nu}\not 
= i_{\mu}, \text{ for } \nu \not = \mu}}
\leb\left(\bigcap_{l=1}^kA^{(i_l)}\setminus \bigcup_{\substack{1\leq j\leq n,\\ j\not \in\{i_1, \dots, i_k\}}} A^{(j)} \right) 
C_{(L^{(i_1)},\dots,L^{(i_k)})}((\theta_{i_1},\dots,\theta_{i_k})^{\top}).
\end{multline*}
\end{proposition}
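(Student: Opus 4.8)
The plan is to reduce the whole statement to a single application of the exponential formula for integrals against a Poisson random measure, after which only deterministic bookkeeping on the trawl overlaps remains.

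First I would fix $\boldsymbol{\theta}\in\R^n$ and use the mixed moving average representation displayed just before the proposition to write the scalar random variable $\boldsymbol{\theta}^{\top}\mathbf{Y}_t$ as a single stochastic integral against $N$:
\begin{equation*}
\boldsymbol{\theta}^{\top}\mathbf{Y}_t=\sum_{i=1}^n\theta_i\int_{E}y_i\,\mathbf{I}_{A^{(i)}}(x,s-t)\,N(d\mathbf{y},dx,ds)=\int_{E}g_{\boldsymbol{\theta}}(\mathbf{y},x,s)\,N(d\mathbf{y},dx,ds),
\end{equation*}
where $g_{\boldsymbol{\theta}}(\mathbf{y},x,s)=\sum_{i=1}^n\theta_i y_i\,\mathbf{I}_{A^{(i)}}(x,s-t)$. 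The key observation is that under Assumption (A1) the \Levy measure $\nu$ is concentrated on $\mathbb{Z}^n\setminus\{\mathbf{0}\}$, so $\min(1,\|\mathbf{y}\|)\equiv 1$ on its support and the condition $\int_{\R^n}\min(1,\|\mathbf{y}\|)\,\nu(d\mathbf{y})<\infty$ forces $\nu$ to be a \emph{finite} measure; combined with $\leb(A^{(i)})<\infty$ this makes $\{g_{\boldsymbol{\theta}}\neq 0\}\subseteq\R^n\times\bigcup_{i}A_t^{(i)}$ a set of finite $\mu$-mass, so $N$ has only finitely many points there, the integral is an a.s.\ finite sum, no centering is needed, and $\int_E|e^{ig_{\boldsymbol{\theta}}}-1|\,d\mu<\infty$. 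Hence the exponential formula for Poisson integrals (equivalently, the characteristic functional of integrals against an integer-valued \Levy basis, cf.\ \cite{RajRos89}) gives
\begin{equation*}
C_{\mathbf{Y}_t}(\boldsymbol{\theta})=\log\mathbb{E}\bigl(e^{i\boldsymbol{\theta}^{\top}\mathbf{Y}_t}\bigr)=\int_{\R^n\times[0,1]\times\R}\Bigl(e^{\,i\sum_{i=1}^n\theta_i y_i\mathbf{I}_{A^{(i)}}(x,s-t)}-1\Bigr)\,\nu(d\mathbf{y})\,dx\,ds.
\end{equation*}

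The remaining argument is purely deterministic. For $(x,s)\in[0,1]\times\R$ set $S(x,s)=\{i:(x,s-t)\in A^{(i)}\}=\{i:\mathbf{I}_{A^{(i)}}(x,s-t)=1\}$. The level sets of $S$ form a measurable partition of $[0,1]\times\R$, and the block on which $S\equiv\{i_1,\dots,i_k\}$ is precisely $\bigcap_{l=1}^k A_t^{(i_l)}\setminus\bigcup_{j\notin\{i_1,\dots,i_k\}}A_t^{(j)}$. Splitting the outer $dx\,ds$ integral over this partition, on each such block the exponent collapses to $\sum_{l=1}^k\theta_{i_l}y_{i_l}$, so by \eqref{Mult-LK} applied to the corresponding sub-vector (i.e.\ with the remaining components of $\boldsymbol{\theta}$ set to zero) the inner $\nu$-integral equals the constant $C_{(L^{(i_1)},\dots,L^{(i_k)})}((\theta_{i_1},\dots,\theta_{i_k})^{\top})$, while the block $S=\emptyset$ contributes nothing because there the integrand vanishes. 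Pulling out these constants, using translation invariance of $\leb$ to rewrite $\leb(\bigcap_l A_t^{(i_l)}\setminus\bigcup_j A_t^{(j)})$ as $\leb(\bigcap_l A^{(i_l)}\setminus\bigcup_j A^{(j)})$, and summing over $k=1,\dots,n$ and over the nonempty index sets $\{i_1,\dots,i_k\}$ as displayed, gives the claimed formula.

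The probabilistic input (the exponential formula together with the finiteness checks above) is routine, so the main thing to get right is the combinatorial step: verifying that the sets $\{S(x,s)=S\}$ genuinely partition $[0,1]\times\R$, that the inclusion--exclusion is organised so the coefficients are the ``$\bigcap\setminus\bigcup$'' sets rather than raw intersections, and that each overlap region is counted with the correct multiplicity in the displayed double sum.
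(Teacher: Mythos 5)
Your proposal is correct and follows essentially the same route as the paper: both start from the integral formula for the cumulant function of the Poisson/L\'evy-basis integral and then decompose the union of the (shifted) trawls into the $2^n-1$ disjoint ``$\bigcap\setminus\bigcup$'' regions according to which trawls contain each point. The only cosmetic difference is that the paper performs the final step probabilistically (writing $\boldsymbol{\theta}^{\top}\mathbf{Y}_t$ as a sum of independent contributions over the partition and invoking the independently scattered property), whereas you split the deterministic $dx\,ds$ integral directly; your added finiteness checks (finiteness of $\nu$ on $\mathbb{Z}^n\setminus\{\mathbf{0}\}$, finite trawl measure) are a welcome justification of the exponential formula that the paper takes for granted.
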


\begin{corollary}
In the special case when $A^{(1)}=\cdots=A^{(n)}=A$, the characteristic function  simplifies to
$\mathbb{E}(\exp(i{\boldsymbol \theta}^{\top}{\bf Y}_t)) =
\exp\left(\leb(A) \text{C}_{{\bf L'}}({\boldsymbol \theta})\right)$. \end{corollary}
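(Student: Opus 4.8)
The plan is to obtain the corollary as a direct specialisation of Proposition~\ref{CharFctTrawl}. Setting $A^{(1)}=\cdots=A^{(n)}=A$, I would first simplify the set-theoretic expression appearing under the Lebesgue measure. For any tuple of distinct indices $i_1,\dots,i_k$ one has $\bigcap_{l=1}^k A^{(i_l)}=A$; moreover, if $k<n$ there is at least one index $j\notin\{i_1,\dots,i_k\}$, so $\bigcup_{j\notin\{i_1,\dots,i_k\}}A^{(j)}=A$ as well, and hence $\bigcap_{l=1}^k A^{(i_l)}\setminus\bigcup_{j\notin\{i_1,\dots,i_k\}}A^{(j)}=A\setminus A=\emptyset$, which has Lebesgue measure zero. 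Consequently every summand with $k<n$ drops out of the cumulant function in Proposition~\ref{CharFctTrawl}.

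It then remains to handle $k=n$. Here the only admissible index set is $\{1,\dots,n\}$, the outer union is empty, and the set reduces to $A\setminus\emptyset=A$, contributing the weight $\leb(A)$. The accompanying cumulant is $C_{(L^{(1)},\dots,L^{(n)})}(\boldsymbol\theta)$, i.e.\ the joint cumulant function of all $n$ coordinates of the \Levy basis, which by construction coincides with the cumulant function $\text{C}_{\mathbf{L}'}(\boldsymbol\theta)$ of the \Levy seed given in~\eqref{Mult-LK}. Putting the two steps together yields $C_{\mathbf{Y}_t}(\boldsymbol\theta)=\leb(A)\,\text{C}_{\mathbf{L}'}(\boldsymbol\theta)$, hence the stated form of the characteristic function. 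The one point I would be careful about is the combinatorial normalisation of the $k=n$ term: one must check that the indexing convention in Proposition~\ref{CharFctTrawl} (distinct ordered tuples versus subsets) produces exactly $\leb(A)\,\text{C}_{\mathbf{L}'}(\boldsymbol\theta)$ rather than a spurious multiple of it. Since the surviving summand is invariant under permuting $i_1,\dots,i_n$, this is just a matter of reading off the convention correctly.

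As a cross-check -- and in fact the quickest route on its own -- I would argue directly from the representation of the process. With a common trawl, $\mathbf{I}_{\mathbf{A}}(x,s-t)$ equals $\mathbf{I}_A(x,s-t)$ times the identity matrix, so $Y^{(i)}_t=L^{(i)}(A_t)$ for every $i$ with the \emph{same} set $A_t=A+(0,t)$. The defining property of a homogeneous \Levy basis, $\text{C}_{\mathbf{L}(dx,ds)}(\boldsymbol\theta)=\text{C}_{\mathbf{L}'}(\boldsymbol\theta)\,dx\,ds$, gives $\mathbb{E}(\exp(i\boldsymbol\theta^{\top}\mathbf{Y}_t))=\exp(\leb(A_t)\,\text{C}_{\mathbf{L}'}(\boldsymbol\theta))$ by additivity over $A_t$, and $\leb(A_t)=\leb(A)$ by stationarity of the trawl. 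This recovers the claim using only facts already established in the excerpt, and it confirms that no combinatorial factor is lost in the first argument.
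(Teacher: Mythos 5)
Your proposal is correct and follows the route the paper intends: the corollary is stated as an immediate specialisation of Proposition \ref{CharFctTrawl}, with all terms $k<n$ vanishing because the partition elements become $A\setminus A=\emptyset$ and only the $k=n$ cell $A$ surviving with weight $\leb(A)$, and your direct cross-check via $\text{C}_{\mathbf{L}(dx,ds)}(\boldsymbol\theta)=\text{C}_{\mathbf{L}'}(\boldsymbol\theta)\,dx\,ds$ integrated over $A_t$ is exactly the computation underlying the proposition in this special case. Your caution about the ordered-tuple versus subset convention in the $k=n$ sum is well placed — the inner sum must be read as ranging over the $2^n-1$ nonempty subsets (the cells $S_1,\dots,S_{2^n-1}$ of the partition in the proof of the proposition), under which convention no spurious $n!$ factor arises.
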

This is an important result, which implies that  to any infinitely divisible integer-valued law  $\pi$, say,  there exists a stationary integer-valued trawl process having $\pi$ as its  marginal law.

\subsubsection{Cross-sectional and serial dependence}
Let us now focus on the cross-sectional and the serial dependence of multivariate integer-valued trawl processes.

First, the cross-sectional dependence is  entirely characterised through the multivariate \Levy measure $\nu$.
For instance, when we focus on the pair of the $i$th and the $j$th component for $i,j \in \{1, \dots, n\}$,  we define the corresponding 
joint \Levy measure by 
\begin{align*}
\nu^{(i,j)}(d\cdot, d\cdot) = \int_{\R} \dots \int_{\R}
\nu(dy_1, \dots, dy_{i-1}, d\cdot,dy_{i+1}, \dots, dy_{j-1},d\cdot, dy_{j+1}, \dots, dy_n).
\end{align*}
Then the  covariance between the $i$th and the $j$th \Levy seed is given by 
\begin{align*}
\kappa_{i,j}:=\int_{\R}\int_{\R}
y_i y_j\nu^{(i,j)}(dy_i, dy_j).
\end{align*}
Relevant specifications of $\nu$ will be discussed in Section \ref{MultDist}.

Second, the serial dependence is determined through the trawls. More precisely, following \cite{BN2011}, we introduce the so-called  \emph{autocorrelator} between the $i$th and the $j$th component, which is defined as 
\begin{align*}
R_{ij}(h) = \leb(A_0^{(i)}\cap A_h^{(j)}), \quad h \geq 0.
\end{align*}

Let us now focus on the autocorrelators 
for trawls of  type \eqref{StTrawl}.
\begin{proposition}
Suppose the trawls $A^{(i)}$, $i \in \{1, \dots, n\}$ are of type  \eqref{StTrawl}.
Then for $h\geq 0$ the intersection of two trawls is given by 
\begin{align*}
A^{(i)}\cap A^{(j)}_h =\{(x,s): s \leq 0, 0 \leq x \leq \min\{d^{(i)}(s), d^{(j)}(s-h)\}\}.
\end{align*}
I.e.~the autocorrelator satisfies
\begin{align*}
R_{ij}(h)
&= \int_{-\infty}^0\min\{d^{(i)}(s), d^{(j)}(s-h)\} ds.
\end{align*}
\end{proposition}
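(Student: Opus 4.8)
The plan is to identify the intersection set pointwise and then integrate slicewise. First I would recall the shift formula for trawls of type \eqref{StTrawl}: the display following \eqref{dstar} gives $A^{(j)}_t = A^{(j)} + (0,t) = \{(x,s): s\le t,\ 0\le x\le d^{(j)}(s-t)\}$, so in particular
\[
A^{(j)}_h = \{(x,s): s\le h,\ 0\le x\le d^{(j)}(s-h)\},
\qquad
A^{(i)} = A^{(i)}_0 = \{(x,s): s\le 0,\ 0\le x\le d^{(i)}(s)\}.
\]

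Next, I would characterise membership in the intersection: a point $(x,s)$ lies in $A^{(i)}\cap A^{(j)}_h$ if and only if it satisfies simultaneously $s\le 0$, $s\le h$, $0\le x\le d^{(i)}(s)$ and $0\le x\le d^{(j)}(s-h)$. This is the only place where the hypothesis $h\ge 0$ enters: it makes the constraint $s\le h$ redundant in the presence of $s\le 0$, so the admissible range of $s$ in the intersection is $(-\infty,0]$. The two upper bounds on $x$ then collapse to the single bound $x\le\min\{d^{(i)}(s), d^{(j)}(s-h)\}$, which together with $x\ge 0$ yields precisely the asserted description of $A^{(i)}\cap A^{(j)}_h$.

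Finally, for the autocorrelator I would apply Tonelli's theorem to $R_{ij}(h)=\leb(A^{(i)}\cap A^{(j)}_h)=\int\!\!\int \mathbf{I}_{A^{(i)}\cap A^{(j)}_h}(x,s)\,dx\,ds$, integrating first in $x$. For fixed $s\le 0$ the $x$-section is the interval $[0,\min\{d^{(i)}(s), d^{(j)}(s-h)\}]$; since $d^{(i)}$ and $d^{(j)}$ take values in $[0,1]$ the minimum is non-negative, so this section has one-dimensional Lebesgue measure equal to that minimum. For $s>0$ the section is empty. Hence $R_{ij}(h)=\int_{-\infty}^0\min\{d^{(i)}(s), d^{(j)}(s-h)\}\,ds$. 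The integrand is continuous (both trawl functions are continuous and $\min$ preserves continuity), hence measurable, and is dominated by $d^{(i)}(s)$, whose integral over $(-\infty,0]$ is finite by \eqref{dstar} because $\leb(A^{(i)})<\infty$; so the integral is well-defined and finite.

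I do not anticipate any genuine obstacle here: the statement is a set-theoretic identity followed by a routine Tonelli argument. The only points that warrant a line of care are getting the direction of the shift in $A^{(j)}_h$ correct and noting explicitly that $h\ge 0$ is what allows the constraint $s\le h$ to be dropped in favour of $s\le 0$.
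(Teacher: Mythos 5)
Your proof is correct and complete; the paper itself omits the proof as ``straightforward'', and your argument (shift formula for $A^{(j)}_h$, dropping the redundant constraint $s\le h$ via $h\ge 0$, then Tonelli on the $x$-sections) is precisely the routine computation the authors had in mind. Nothing to add.
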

The proof is straightforward and hence omitted.

\begin{remark}
Note that the autocorrelators can be computed as soon as the corresponding trawl functions and their parameters are known. We will come back to this aspect when we discuss inference for trawl processes in Section \ref{InfSection}.
\end{remark}
Let us consider a canonical example when the trawl functions are given by exponential functions.
\begin{example}\label{Ex-expR12}
Let 
$d^{(i)}(s) =\exp(\lambda_i s)$. For  $i, j \in \{1,\dots, n\}$ suppose that $\lambda_i < \lambda_j$. Then for $s \leq 0$ we have that $e^{\lambda_i s} \geq e^{\lambda_j s}$ and hence
$A^{(i)}\cap A^{(j)} = A^{(j)}$. Hence $\leb(A^{(i)}\cap A^{(j)})=\leb(A^{(j)})=1/\lambda_j$.
Similarly, we get that 
$R_{ij}(h)=\leb(A^{(i)}\cap A^{(j)}_h)=\frac{1}{\lambda_j}e^{-\lambda_j h}$, for $h\geq 0$.
\end{example}

For monotonic trawl functions we observe that there are two possible scenarios: Either, one trawl function is always \lq below\rq\ the other one, which implies that 
\begin{align*}R_{ij}(h) =\min(\leb(A^{(i)}),\leb(A^{(j)})),
\end{align*}
see e.g.~Example \ref{Ex-expR12},
 or the trawl functions intersect each other. 
  In the latter case, suppose there is one intersection of $d^{(i)}$ and $d^{(j)}$ at time  $s^* <0$, say.  Consider the scenario  when $d^{(i)}(s) \leq d^{(j)}(s)$  
for $s \leq s^*$ and $d^{(j)}(s) \leq d^{(i)}(s)$ for $s^* \leq s \leq 0$. Then
\begin{align*}
R_{ij}(0)&=\leb (A^{(i)}\cap A^{(j)}) 
= \int_{-\infty}^{s^*}d^{(i)}(s)ds + \int_{s^*}^0d^{(j)}(s)ds.
\end{align*}
Extensions to a multi-root scenario are straightforward.

Clearly, the autocorrelators are closely related to the autocorrelation function. More precisely, we have the following result, which follows directly from the expression of the cumulant function of the multivariate trawl process.

\begin{proposition}
The covariance between two (possibly shifted) components $1\leq i \leq j\leq n$ for $t, h \geq 0$ is given by 
\begin{align*}
\rho_{ij}(h) &=
\Cov\left(L^{(i)}(A_t^{(i)}), L^{(j)}(A_{t+h}^{(j)})\right)
=  \leb\left(A^{(i)}\cap A_h^{(j)}\right)\left(\int_{\R}\int_{\R}
y_i y_j\nu^{(i,j)}(dy_i, dy_j)\right)\\
&= R_{ij}(h) \kappa_{i,j}.
\end{align*}
Also, the corresponding 
 auto- and cross-correlation function
 is given by 
\begin{align*}
r_{ij}(h)&:=\Cor\left(L^{(i)}(A_t^{(i)}), L^{(j)}(A_{t+h}^{(j)})\right)
=
\frac{\leb(A^{(i)}\cap A_h^{(j)}) \left(\int_{\R}\int_{\R}
y_i y_j\nu^{(i,j)}(dy_i, dy_j)\right)}{\sqrt{\leb(A^{(i)}) \Var(L^{'(i)})   \leb(A^{(j)}) \Var(L^{'(j)})}}\\
&=\frac{R_{ij}(h)}{\sqrt{\leb(A^{(i)}) \leb(A^{(j)})}}
\frac{\kappa_{i,j}}{\sqrt{\Var(L^{'(i)})    \Var(L^{'(j)})}},
\end{align*}
i.e.~the autocorrelation function is \emph{proportional} to the autocorrelators.
\end{proposition}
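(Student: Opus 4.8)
The plan is to obtain $\rho_{ij}(h)$ by reading off the mixed second-order cumulant of the bivariate vector $\mathbf{W}:=\left(L^{(i)}(A_t^{(i)}),\,L^{(j)}(A_{t+h}^{(j)})\right)^{\top}$ from the cumulant function supplied by Proposition~\ref{CharFctTrawl}. First I would observe that the pair $(L^{(i)},L^{(j)})$ is itself a $\mathbb{Z}^2$-valued homogeneous \Levy basis, whose \Levy measure is the marginal $\nu^{(i,j)}$ of $\nu$, and that $A_t^{(i)}$ and $A_{t+h}^{(j)}$ are again admissible trawls in the sense of Definition~\ref{TrawlDef}. Applying Proposition~\ref{CharFctTrawl} to this bivariate trawl process and collecting terms over the three disjoint regions $A_t^{(i)}\setminus A_{t+h}^{(j)}$, $A_{t+h}^{(j)}\setminus A_t^{(i)}$ and $A_t^{(i)}\cap A_{t+h}^{(j)}$ gives
\begin{align*}
C_{\mathbf{W}}(\theta_i,\theta_j)
&=\leb\!\left(A_t^{(i)}\setminus A_{t+h}^{(j)}\right)C_{L^{(i)}}(\theta_i)
+\leb\!\left(A_{t+h}^{(j)}\setminus A_t^{(i)}\right)C_{L^{(j)}}(\theta_j)\\
&\quad+\leb\!\left(A_t^{(i)}\cap A_{t+h}^{(j)}\right)C_{(L^{(i)},L^{(j)})}(\theta_i,\theta_j),
\end{align*}
where $C_{(L^{(i)},L^{(j)})}$ is the cumulant function of the bivariate \Levy seed.

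Second, assuming (as is implicitly needed for the statement, since variances appear) that $\nu$ has a finite second moment, so that all cumulant functions above are twice continuously differentiable and differentiation under the integral sign is legitimate, I would use $\Cov\!\left(L^{(i)}(A_t^{(i)}),L^{(j)}(A_{t+h}^{(j)})\right)=-\partial_{\theta_i}\partial_{\theta_j}C_{\mathbf{W}}(\theta_i,\theta_j)\big|_{\theta_i=\theta_j=0}$. The first two terms depend on only one of $\theta_i,\theta_j$ and therefore drop out of the mixed derivative; only the intersection term contributes, and from the cumulant of the \Levy seed \eqref{Mult-LK} one computes $-\partial_{\theta_i}\partial_{\theta_j}C_{(L^{(i)},L^{(j)})}\big|_{\theta_i=\theta_j=0}=\int_{\R}\int_{\R}y_iy_j\,\nu^{(i,j)}(dy_i,dy_j)=\kappa_{i,j}$. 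Combined with translation invariance of Lebesgue measure, which yields $\leb(A_t^{(i)}\cap A_{t+h}^{(j)})=\leb(A_0^{(i)}\cap A_h^{(j)})=R_{ij}(h)$, this gives $\rho_{ij}(h)=R_{ij}(h)\,\kappa_{i,j}$.

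Third, for the auto- and cross-correlation function I would invoke the $i=j$, $h=0$ specialisation of the same argument — equivalently, the Corollary following Proposition~\ref{CharFctTrawl} — to obtain $\Var(L^{(i)}(A_t^{(i)}))=\leb(A^{(i)})\,\Var(L^{'(i)})$ and the analogous identity for $j$; dividing $\rho_{ij}(h)$ by $\sqrt{\leb(A^{(i)})\,\Var(L^{'(i)})\,\leb(A^{(j)})\,\Var(L^{'(j)})}$ and regrouping the factors then produces the stated formula for $r_{ij}(h)$ and displays it as $R_{ij}(h)$ times a constant independent of $h$.

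The calculation is routine and I do not anticipate a genuine obstacle; the only points deserving a sentence of justification are that sub-collections of $\mathbf{L}$ are again integer-valued \Levy bases, so that Proposition~\ref{CharFctTrawl} applies directly to the shifted pair, and the finite-second-moment condition underpinning the two differentiations at the origin. I would also note an equivalent, marginally shorter route that avoids the cumulant function altogether: starting from $\mathbf{Y}_t=\int \mathbf{y}\,\mathbf{I}_{\mathbf{A}}(x,s-t)\,N(d\mathbf{y},dx,ds)$ and using the covariance identity $\Cov(\int f\,dN,\int g\,dN)=\int fg\,d\mu$ for integrals against a Poisson random measure with intensity $\mu$ reproduces $\rho_{ij}(h)=\kappa_{i,j}\,\leb(A_t^{(i)}\cap A_{t+h}^{(j)})$ in one line, after which the correlation follows as above.
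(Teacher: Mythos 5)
Your proposal is correct and follows essentially the same route as the paper, which states only that the result ``follows directly from the expression of the cumulant function of the multivariate trawl process'': you extract the mixed second cumulant from the partition of $A_t^{(i)}\cup A_{t+h}^{(j)}$ into disjoint pieces, observe that only the intersection term survives the mixed differentiation at the origin, and normalise by the variances $\leb(A^{(i)})\Var(L^{'(i)})$. Your explicit flagging of the finite-second-moment condition and the one-line alternative via $\Cov\bigl(\int f\,dN,\int g\,dN\bigr)=\int fg\,d\mu$ are both sound additions but do not change the substance of the argument.
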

We will come back to the above result when we turn our attention to parametric inference for MIVT processes in Section \ref{InfSection}.

\section{Parametric specifications}\label{Section:crosssection} 
In order to showcase the flexibility of the new modelling framework, we will discuss various parametric model specifications in this section, where we start off by considering specifications of the trawl, followed by models for the multivariate \Levy seed.
\subsection{Specifying the trawl function}
We have already covered the case of an exponential trawl function  above and will now present alternative choices for the trawl functions and their corresponding autocorrelators, see also \cite{BNLSV2014} for other examples.

While an exponential trawl leads to an exponentially decaying autocorrelation function, we sometimes need model specifications which exhibit a more slowly decaying autocorrelation function. Such trawl functions can be constructed from the exponential trawl function by randomising the memory parameter as we will describe in the following example. 

To simplify the notation we will in the following
supress the indices $i$ for the corresponding component in the multivariate construction, i.e.~we set $d=d^{(i)}$ and  do not write the sub-/superscripts for the corresponding parameters.  
\begin{example}\label{Ex-ExGIG}
 Define the trawl function by 
\begin{align*}
 d(z) = \int_{0}^{\infty} e^{\lambda z} \pi(d \lambda), \qquad \text{for} \; z \leq 0,
\end{align*}
for a probability measure $\pi$ on $(0, \infty)$.
Suppose that  $\pi$ is absolutely continuous with density $f_{\pi}$, then the corresponding trawl function can be written as
\begin{align*}
 d(z) = \int_{0}^{\infty} e^{\lambda z} f_{\pi}(\lambda) d\lambda,
\end{align*}
which again leads to a monotonic trawl function.
The corresponding autocorrelation function is given by
\begin{align*}
r(h) &= 
\frac{\int_0^{\infty}\frac{1}{\lambda}e^{-\lambda h} \pi(d \lambda)}{\int_0^{\infty}\frac{1}{\lambda} \pi(d \lambda)},
\end{align*}
 assuming that  $\int_0^{\infty}\frac{1}{\lambda} \pi(d \lambda)< \infty$.
\end{example}

\cite{BNLSV2014} discuss various constructions of that type depending on different choices of the probability measure $\pi$ and we refer to that article for more details on the computations. 

In applications, we often  assume that $\pi$ is absolutely continuous with respect to the Lebesgue measure and we denote its density by $f_{\pi}$.
A very flexible parametric framework can be obtained by choosing $f_{\pi}$ to be a generalised inverse Gaussian (GIG) density as we shall discuss in the next example.

\begin{example}\label{ExGIGDetails}
Suppose that  $f_{\pi}$ is the density of the GIG distribution, i.e.
\begin{align}\label{GIGdensity}
f_{\pi}(x) = \frac{(\gamma/\delta)^{\nu}}{2K_{\nu}(\delta \gamma)} x^{\nu -1} \exp\left(-\frac{1}{2}(\delta^2 x^{-1} + \gamma^2 x)\right),
\end{align}
where $\nu \in \mathbb{R}$ and $\gamma$ and $\delta$ are both nonnegative and not simultaneously equal to zero. Here we denote by  $K_{\nu}(\cdot)$ the modified Bessel function of the third kind.
Straightforward computation show that the corresponding 
 trawl function is given by
\begin{align*}
 d(z) &= \left(1-\frac{2z}{\gamma^2}\right)^{-\frac{\nu}{2}}
\frac{K_{\nu}(\delta\gamma \sqrt{1-\frac{2z}{\gamma^2}})}{K_{\nu}(\delta \gamma)},
\end{align*}
and the corresponding size of the trawl set equals 
\begin{align*}
\leb(A)&= \frac{(\gamma/\delta) K_{\nu-1}(\delta \gamma)}{K_{\nu}(\delta \gamma)}.
\end{align*}
Moreover,  the autocorrelation function is given by
\begin{align*}
r(h) &=  \frac{K_{\nu-1}(\delta \sqrt{\gamma^2+ 2h})}{K_{\nu-1}(\delta \gamma)} \left(1+\frac{2h}{\gamma^2} \right)^{\frac{1}{2}(1-\nu)}.
\end{align*}
\end{example}
Some special cases of the GIG distribution include the inverse Gaussian and the gamma distribution, which lead to interesting parametric examples which we shall study next.
\begin{example}\label{supIGTrawl}
Suppose we choose an 
 inverse Gaussian (IG) density function for $f_{\pi}$. Then we obtain the so-called
sup-IG trawl function, which can be written as
\begin{align*}
d(z) = \left(1-\frac{2z}{\gamma^2}\right)^{-1/2}\exp\left(\delta \gamma\left(1-\sqrt{1-\frac{2z}{\gamma^2}}\right)\right),
\end{align*}
for nonnegative parameters $\delta, \gamma$ which are assumed not to be simultaneously equal to zero.
Then we have that $\leb(A) = \frac{\gamma}{\delta}$
and the corresponding 
autocorrelation function is given by 
\begin{align*}
r(h) = \exp\left(\delta \gamma \left(1-\sqrt{1+\frac{2h}{\gamma^2}} \right) \right), \quad h \geq 0.
\end{align*}

\end{example}

Next, we consider an example where the trawl function decays according to a power law.
\begin{example}
A long memory specification can be obtained when the probability measure $\pi$ is chosen to have Gamma distribution. In that case, we obtain a trawl function given by 
\begin{align*}
d(z) = \left( 1-\frac{z}{\alpha}\right)^{-H}, \quad \alpha > 0, H>1.
\end{align*}
Then $\leb(A)=\alpha/(H-1)$.
Also,
\begin{align*}
r(h)= \left(1 + \frac{h}{\alpha}\right)^{1-H}.
\end{align*}
I.e.~when $H\in (1,2]$ we have a stationary long memory model, and when when $H>2$ we obtain a stationary short memory model.
\end{example}
Finally, we consider the case of a seasonal trawl function.
\begin{example}
A seasonally varying trawl function can be obtained by setting $d(z)=d_m(z)d_s(z)$, where $d_m$ is a monotonic trawl function and $d_s$ is a periodic seasonal function. E.g.~as discussed in \cite[Example 9]{BNLSV2014}, we can consider the following functional form
\begin{align*}
d(z) = \frac{1}{2}\exp(\lambda x)\left[\cos(az) +1\right], \quad \text{where } a =2\pi\psi.
\end{align*}
Here $\lambda >0$ determines how quickly the function decays, whereas $\psi\in \R$ denotes the period of the season. In this case, we obtain
$\leb(A)=(2\lambda^2 +a^2)/(2\lambda(\lambda^2+a^2))$ and 
\begin{align*}
r(h) = \frac{e^{-\lambda h}}{2\lambda(\lambda^2 + a^2)}\left(\lambda^2\cos(ah)-a\lambda \sin(ah) + \lambda^2 + a^2\right).
\end{align*}
Note that this construction leads to a seasonal autocorrelation function, but not to seasonality in the levels of the trawl process.
\end{example}

\subsection{Modelling the cross-sectional dependence}\label{MultDist}
The trawl process is completely specified, as soon as both the trawls and the marginal distribution of the multivariate \Levy seed are specified.
When it comes to infinitely divisible discrete distributions, the Poisson distribution is the natural starting point and we will review multivariate extensions in Section \ref{MultP}.
However, since many count data exhibit overdispersion, it is crucial that we go beyond the Poisson framework. 
In the univariate context, there have been a variety of articles on suitable discrete distributions, see e.g.~\cite{PuigValero2006} and 
 \cite{NKarlis2008} amongst others. However, the literature on parametric classes of multivariate infinitely divisible discrete distributions with support on $\mathbb{N}^n$ is rather sparse. 
We know that any such distribution necessarily is of discrete compound Poisson type, see 
\cite{Feller1968,OspinaGerber1987, Sundt2000}, and always has 
 non-negatively correlated components.
 In Section \ref{PoissonMix} we will discuss a possible parametrisation based on Poisson mixtures of random additive-effect-type models.

\subsubsection{Multivariate Poisson marginal distribution}\label{MultP} 
As before, we denote by ${\bf L'} =(L^{'(1)},\dots, L^{'(n)})^{\top}$ the \Levy seed. To start off with we present a multivariate Poisson law for the \Levy seed. 
In order to introduce dependence between the Poisson random variables, one typically uses a so-called \emph{common factor approach}, which we outline in the following, see e.g.~\cite{Karlis2002,KarlisMeligkotsidou2005}.

Suppose that we have $m\in \mathbb{N}$ independent random variables $X^{(i)} \sim Poi(\theta_i)$ for $i=1, \dots, m$, and set ${\bf X}=(X^{(1)},\dots, X^{(m)})^{\top}$.

Let ${\bf A}$ denote a $n\times m$-matrix (for $n \in \mathbb{N}$) with 0-1 entries and having no duplicate columns.
We then set 
${\bf L'} ={\bf A}{\bf X}$, which clearly follows a 
multivariate Poisson distribution. The corresponding mean and variance can be easily computed and are given by
$\mathbb{E}({\bf L'})= {\bf A} {\bf M}$ and $\Var({\bf L'}) = {\bf A} {\boldsymbol \Sigma} {\bf A}^{\top}$, respectively, 
where ${\bf M}=\mathbb{E}({\bf X})$ and ${\boldsymbol \Sigma} = \Var({\bf X})$. Since the components $X^{(i)}$ are independent, we have
${\boldsymbol \Sigma} = \diag(\theta_1, \dots, \theta_m)$ and ${\bf M}^{\top} = (\theta_1, \dots, \theta_m)$.
The above construction implies that
$L^{'(i)}\sim Poi(v_i)$, where $v_i = \sum_{k=1}^ma_{ik}\theta_i$.
Also, for $i\not = j$ we have that
\begin{align*}
\Cor(L^{'(i)}, L^{'(j)})= \frac{\sum_{k=1}^{m} a_{ik}\theta_k a_{kj}}{\sqrt{\sum_{k=1}^{m} a_{ik}^2\theta_k  \sum_{k=1}^{m} a_{jk}^2\theta_k }}.
\end{align*}

Let us study some relevant examples within this modelling framework.

\begin{example}
An $n$-dimensional model with one common factor between all components can be obtained by choosing $m=n+1$, and 
\begin{align*}
{\bf A} =  \left( \begin{array}{ccccc}1&0&\cdots&\cdots&1 \\
0 & 1& 0& \cdots & 1\\
\vdots & \ddots & \ddots & \ddots & \vdots\\
0 & \cdots & 0 & 1 & 1
\end{array}\right),
&& {\bf X} = \left(\begin{array}{c} X^{(1)}\\
\hdots\\
X^{(n)}\\
X^{(0)}
\end{array} \right)
\end{align*}
and independent Poisson random variables 
  $X^{(i)}\sim Poi(\theta_i)$, for  $i=0,1,\dots,n$. Then we have 
  \begin{align*}
  L^{'(1)} = X^{(1)} + X^{(0)},&&  L^{'(2)} = X^{(2)} + X^{(0)}, && \cdots, &&
  L^{'(n)} = X^{(n)} + X^{(0)}.
  \end{align*}
  Here each component has marginal Poisson distribution, i.e.~$L^{'(i)}\sim Poi(\theta_i + \theta_0)$ and for $i\not = j$ we have  that $\Cov(L^{'(i)},L^{'(j)}) = \theta_0$. 
   \end{example}
   Beyond the bivariate case, the example above presents a rather restrictive model for applications since it only allows for one common factor.  A less sparse choice of ${\bf A}$ would allow for more flexible model specifications.
   Let us consider a more realistic example in the trivariate case next.
   \begin{example}
Consider a model of the type
  \begin{align*}
 L^{'(1)} &= X^{(1)} + X^{(12)} + X^{(13)} + X^{(123)},\\
  L^{'(2)} &= X^{(2)} + X^{(12)} + X^{(23)} + X^{(123)},\\
   L^{'(3)} &= X^{(3)} + X^{(13)} + X^{(23)} + X^{(123)}
  \end{align*}
  for independent Poisson random variables $X^{(i)}$ with parameters $\theta_i$, for \\ $i \in \{\{1\}, \{2\}, \{3\}, \{12\}, \{13\}, \{23\}, \{123\} \}$.
  Such a model specification corresponds to the choice of 
  \begin{align*}
  {\bf A } = \left(
\begin{array}{ccccccc}
1 & 0 & 0 & 1 & 1& 0 & 1\\
0 & 1& 0 & 1& 0 & 1 & 1\\
0 & 0 & 1& 0 & 1 & 1& 1
\end{array} \right),
&& {\bf X}= \left(
X^{(1)}, X^{(2)}, X^{(3)},X^{(12)},X^{(13)},X^{(23)},X^{(123)}
 \right)^{\top}.
  \end{align*}
  Here we have that $L^{'(1)}\sim Poi(\theta_1+\theta_{12}+\theta_{13}+\theta_{123})$,
  $L^{'(2)}\sim Poi(\theta_2+\theta_{12}+\theta_{23}+\theta_{123})$ and
  $L^{'(3)}\sim Poi(\theta_3+\theta_{13}+\theta_{23}+\theta_{123})$.
  \end{example}
  The above example treats a very general case which allows for all possible bivariate as well as a trivariate covariation effect. A slightly simpler specification is given in the next example, which only considers pairwise interaction terms. 
   \begin{example}
    Choosing 
  \begin{align*}
  {\bf A } = \left(
\begin{array}{cccccc}
1 & 0 & 0 & 1 & 1& 0 \\
0 & 1& 0 & 1& 0 & 1 \\
0 & 0 & 1& 0 & 1 & 1
\end{array} \right),
&& {\bf X}= \left(
X^{(1)}, X^{(2)}, X^{(3)},X^{(12)},X^{(13)},X^{(23)}
 \right)^{\top},
  \end{align*}
   results in a trivariate model of the form 
  \begin{align*}
 L^{'(1)} = X^{(1)} + X^{(12)} + X^{(13)},&&
  L^{'(2)} = X^{(2)} + X^{(12)} + X^{(23)}, &&
   L^{'(3)} = X^{(3)} + X^{(13)} + X^{(23)},
  \end{align*}
  for independent Poisson random variables $X^{(i)}$ with parameters $\theta_i$, for\\ $i \in \{\{1\}, \{2\}, \{3\}, \{12\}, \{13\}, \{23\}\}$.
  Then we have that $L^{'(1)}\sim Poi(\theta_1+\theta_{12}+\theta_{13})$,
  $L^{'(2)}\sim Poi(\theta_2+\theta_{12}+\theta_{23})$ and
  $L^{'(3)}\sim Poi(\theta_3+\theta_{13}+\theta_{23})$; also, 
  \begin{align*}
  \Var({\bf L'}) = \left( 
\begin{array}{ccc} 
\theta_1 + \theta_{12}+\theta_{13} & \theta_{12} & \theta_{13}\\
 \theta_{12} & \theta_2 + \theta_{12} +\theta_{23}& \theta_{23}\\
 \theta_{13} & \theta_{23} & \theta_{3}+\theta_{13}+\theta_{23}
\end{array} \right).
  \end{align*}
  \end{example}

\subsubsection{Multivariate discrete compound Poisson  marginal distribution obtained from Poisson mixtures} \label{PoissonMix} 

While the Poisson distribution is a good starting point in the context of modelling count data, for many applications it might be too restrictive. In particular, often one needs to work with distributions which allow for overdispersion, i.e.~that the variance is bigger than the mean. 

Since we are interested in staying within the class of discrete infinitely divisible stochastic processes, the most general class of distributions we can consider are the discrete compound Poisson distributions.
 To this end, we model the \Levy seed by an $n$-dimensional compound Poisson random variable, see e.g.~\citet[Theorem 4.3]{Sato1999},  given by 
  \begin{align*}
  {\bf L}'=\sum_{j=1}^{N_1}{\bf C}_j,
  \end{align*}
  where $N=(N_t)_{t\geq 0}$ is an homogeneous Poisson process of rate $v>0$ and the $({\bf C}_j)_{j\in \mathbb{N}}$ form a sequence of i.i.d.~random variables independent of $N$ and  which have no atom in ${\bf 0}$, i.e.~not all components are simultaneously equal to zero, more precisely, $\mathbb{P}({\bf C}_j={\bf 0}) = 0$ for all $j$.\\ \

\noindent{\bf General Poisson mixtures}\\
\noindent
Previous research has clearly documented that Poisson mixture distributions provide a flexible class of distributions which are suitable for various applications, see e.g.~\cite{KarlisXekalaki2005} for a review.

In this section, we are going to introduce a parsimonious parametric model class for the  $n$-dimensional \Levy seed ${\bf L}'$, which uses Poisson mixtures and is
 based on the  results in Section 5 of \cite{BNBlaesildSeshadri1992}. 
To this end, consider random variables $X_1, \dots, X_n$ and $Z_1, \dots, Z_n$ for $n \in \mathbb{N}$ and assume that
conditionally on $\{Z_1, \dots, Z_n\}$ the $X_1, \dots, X_n$ are independent and Poisson distributed with means given by the $\{Z_1, \dots, Z_n\}$.

We then model the joint distribution of the $\{Z_1, \dots, Z_n\}$ by a so-called additive effect model as follows:
\begin{align*}
Z_i = \alpha_i U + V_i, \quad i = 1, \dots, n,
\end{align*}
where the random variables $U, V_1, \dots, V_n$ are independent and the $\alpha_1, \dots, \alpha_n$ are nonnegative parameters.

We can easily derive the probability generating function of the joint distribution of $X_1, \dots, X_n$, cf.~\citet[Section 5]{BNBlaesildSeshadri1992}:
\begin{align*}
\mathbb{E} (t_1^{X_1}\cdots t_n^{X_n}) = 
M_U\left( \sum_{i=1}^n \alpha_i(t_i-1)\right) \prod_{i=1}^n M_{V_i}(t_i-1),
\end{align*}
where we denote by $M_X(\theta)=\mathbb{E}(e^{\theta X})$ the moment generating function of a random variable $X$ with parameter $\theta$.

Also, we can  compute the means and the covariance function of the $Y_i$s and find that
\begin{align*}
\mathbb{E}(X_i) = \alpha_i \mathbb{E}(U) + \mathbb{E}(V_i), \quad i =1, \dots, n,
\end{align*}
and
\begin{align*}
\Cov(X_i,X_j) = \left \{ 
\begin{array}{ll}
\alpha_i^2 \Var(U) + \Var(V_i) + \alpha_i \mathbb{E}(U) + \mathbb{E}(V_i), & \text{ if } i= j,\\
\alpha_i \alpha_j \Var(U), & \text{ if } i \not = j.
\end{array} 
\right.
\end{align*}

Next we derive the  
 joint law of $(X_1, \dots,X_n)$, see \cite{BNBlaesildSeshadri1992} for the bivariate case.
\begin{proposition}\label{JointLaw}
In the additive random effect model the joint law of $(X_1, \dots,X_n)$ is given by 
\begin{multline*}
P(X_1=x_1, \dots, X_n=x_n) 
= \frac{1}{x_1!\cdots x_n!} \sum_{j_1=0}^{x_1}\cdots \sum_{j_n=0}^{x_n} {x_1 \choose j_1}\cdots{x_n \choose j_n}\alpha_1^{j_1}\cdots \alpha_j^{j_n}
\\
\cdot \E(U^{j_1+\cdots+j_n}e^{-(\alpha_1 + \cdots + \alpha_n)U})  \prod_{k=1}^n \E(V_k^{ y_k-j_k}e^{-V_k}).
\end{multline*}
\end{proposition}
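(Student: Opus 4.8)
The plan is to condition on the latent variables $(U,V_1,\dots,V_n)$ and then unwind the conditional Poisson structure. By hypothesis, conditionally on $\{Z_1,\dots,Z_n\}$ the $X_i$ are independent with $X_i\sim\mathrm{Poi}(Z_i)$ and $Z_i=\alpha_i U+V_i$; moreover $U$ and the $V_i$ are a.s.\ nonnegative, being conditional Poisson means, so this representation is legitimate. Hence the tower property gives
\begin{align*}
P(X_1=x_1,\dots,X_n=x_n)=\E\!\left[\prod_{i=1}^n\frac{(\alpha_i U+V_i)^{x_i}}{x_i!}\,e^{-(\alpha_i U+V_i)}\right].
\end{align*}
All quantities inside the expectation are nonnegative, so no integrability issue arises at this stage (and the sum over the $x_i$ equals $1$, so everything is finite).

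Next I would expand each factor by the binomial theorem and factorise the exponential, namely
\begin{align*}
(\alpha_i U+V_i)^{x_i}=\sum_{j_i=0}^{x_i}\binom{x_i}{j_i}\alpha_i^{j_i}\,U^{j_i}\,V_i^{x_i-j_i},
\qquad
e^{-\sum_{i=1}^n(\alpha_i U+V_i)}=e^{-(\alpha_1+\cdots+\alpha_n)U}\prod_{i=1}^n e^{-V_i}.
\end{align*}
Multiplying the $n$ finite sums together and collecting powers of $U$ produces $U^{j_1+\cdots+j_n}$ times $\prod_{i=1}^n\binom{x_i}{j_i}\alpha_i^{j_i}V_i^{x_i-j_i}$. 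Since the $j_i$-sums are finite, I can interchange them with the expectation (Tonelli also applies because every summand is nonnegative), and then use the independence of $U,V_1,\dots,V_n$ to split the resulting expectation as $\E\big[U^{j_1+\cdots+j_n}e^{-(\alpha_1+\cdots+\alpha_n)U}\big]\prod_{k=1}^n\E\big[V_k^{x_k-j_k}e^{-V_k}\big]$. Substituting back yields exactly the claimed identity.

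There is really no hard step here: the argument is a conditioning, a binomial expansion, and an appeal to independence. The only points to state carefully are (i) that $U\ge 0$ and $V_i\ge 0$ a.s., which legitimises the conditional Poisson representation and makes all the (nonnegative) expectations well defined, and (ii) the interchange of the finitely many summations with the expectation, which is immediate. As a sanity check one may note that for $n=2$ this reduces to the bivariate formula of \cite{BNBlaesildSeshadri1992}.
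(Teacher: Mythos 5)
Your proposal is correct and follows essentially the same route as the paper: condition on $(U,V_1,\dots,V_n)$, write the conditional Poisson probabilities, expand $(\alpha_i U+V_i)^{x_i}$ by the binomial theorem, and use the independence of $U,V_1,\dots,V_n$ to factor the expectation. The only (immaterial) difference is that the paper writes the conditioning step as an integral against the densities $f_U,f_{V_i}$, whereas you work directly with expectations, which is marginally more general and also sidesteps the sign typo in the paper's exponent; note in passing that the exponent $y_k-j_k$ in the stated formula should read $x_k-j_k$, exactly as your derivation produces.
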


Next, we establish the key result of this section, which links the Poisson mixture distribution based on an additive effect model to a discrete compound Poisson distribution. Recall, see e.g.~\citet[p.~18]{Sato1999}, that an $n$-dimensional compound Poisson random variable ${\bf L}' =\sum_{i=1}^{N_1}{\bf C}_i$ has Laplace transform given by
\begin{align}\label{LaplaceCPP1}
\mathcal{L}_{{\bf L}'}({\boldsymbol \theta})=\E(e^{-{\boldsymbol \theta}^{\top}{\bf L}'})=\exp(v (\mathcal{L}_{{\bf C}}({\boldsymbol \theta})-1)),
\end{align}
where $v>0$ is the intensity  of the  Poisson process $N$  and $\mathcal{L}_{{\bf C}}({\boldsymbol \theta})$ is the Laplace transform of the i.i.d.~jump sizes.

\begin{proposition}\label{CPRep}
The Poisson mixture model of random-additive-effect type can be represented as a discrete compound Poisson distribution with rate
\begin{align*}
v &= -\left( \overline K_U(\alpha) + \sum_{i=1}^n \overline K_{V_i}(1)\right) ,
\end{align*}
where $\alpha = \sum_{i=1}^n\alpha_i$ and $\overline K$ denotes the kumulant function, i.e.~the logarithm of the Laplace transform, 
and the jump size distribution has Laplace transform given by
\begin{align*}
\mathcal{L}_{{\bf C}}({\boldsymbol \theta}) &= \frac{1}{v}\left\{\sum_{k=1}^{\infty} \left(\sum_{i=1}^n \alpha_ie^{-\theta_i}\right)^k q_k^{(U)}
+\sum_{i=1}^n\sum_{k=1}^{\infty} e^{-\theta_i k} q_k^{(V_i)} \right\},
\end{align*}
where
\begin{align*}
q_k^{(U)} = \frac{1}{k!} \int_{\R}e^{-\alpha x}   x^k\nu_U(dx), &&
q_k^{(V_i)} = \int_{\R} \frac{x^k}{k!}e^{-x}\nu_{V_i}(dx), \quad \text{ for } i \in \{1, \dots, n\},
\end{align*}
where $\nu_U$ and  $\nu_{V_i}$ denotes the \Levy measure of $U$ and $V_i$, respectively.
\end{proposition}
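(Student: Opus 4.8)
The plan is to compute the joint Laplace transform of $\mathbf{L}'=(X_1,\dots,X_n)^{\top}$ in closed form and to recognise it as being of the compound-Poisson shape \eqref{LaplaceCPP1}. The starting point is the probability generating function of $(X_1,\dots,X_n)$ already displayed above; setting $t_i=e^{-\theta_i}$ (with $\theta_i\ge 0$, so that the moment generating functions below are evaluated at nonpositive arguments and are finite) gives
\begin{align*}
\mathcal{L}_{\mathbf{L}'}(\boldsymbol{\theta})=M_U\!\left(\textstyle\sum_{i=1}^n\alpha_i(e^{-\theta_i}-1)\right)\prod_{i=1}^n M_{V_i}(e^{-\theta_i}-1).
\end{align*}
Write $\alpha=\sum_{i=1}^n\alpha_i$ and $w=w(\boldsymbol{\theta})=\sum_{i=1}^n\alpha_i e^{-\theta_i}$, and let $\overline{K}_U,\overline{K}_{V_i}$ denote the cumulant (log-Laplace) functions, so that $\log M_U(s)=\overline{K}_U(-s)$, etc. Taking logarithms and using $-\sum_i\alpha_i(e^{-\theta_i}-1)=\alpha-w$ and $-(e^{-\theta_i}-1)=1-e^{-\theta_i}$, one obtains
\begin{align*}
\log\mathcal{L}_{\mathbf{L}'}(\boldsymbol{\theta})=\overline{K}_U(\alpha-w)+\sum_{i=1}^n\overline{K}_{V_i}(1-e^{-\theta_i}),
\end{align*}
all arguments being nonnegative for $\boldsymbol{\theta}\ge\mathbf{0}$, hence in the domain of the respective cumulant functions.

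Next I would invoke the hypothesis — implicit in the statement through the Lévy measures $\nu_U,\nu_{V_i}$ — that $U$ and the $V_i$ are nonnegative infinitely divisible variables with no drift and no Gaussian part, so that $\overline{K}_U(s)=\int_{\mathbb{R}}(e^{-sx}-1)\,\nu_U(dx)$ and likewise for each $V_i$. Substituting and expanding $e^{-(\alpha-w)x}=e^{-\alpha x}e^{wx}=e^{-\alpha x}\sum_{k\ge 0}(wx)^k/k!$, and similarly $e^{-(1-e^{-\theta_i})x}=e^{-x}\sum_{k\ge 0}(e^{-\theta_i}x)^k/k!$, and then interchanging sum and integral (Tonelli applies, all integrands being nonnegative for $x>0$, while $\int e^{-\alpha x}x^k\,\nu_U(dx)<\infty$ and $\int e^{-x}x^k\,\nu_{V_i}(dx)<\infty$ follow from the Lévy property $\int(1\wedge x)\,\nu(dx)<\infty$), I get
\begin{align*}
\log\mathcal{L}_{\mathbf{L}'}(\boldsymbol{\theta})=\overline{K}_U(\alpha)+\sum_{i=1}^n\overline{K}_{V_i}(1)+\sum_{k=1}^{\infty}\Big(\textstyle\sum_{i=1}^n\alpha_i e^{-\theta_i}\Big)^{k}q_k^{(U)}+\sum_{i=1}^n\sum_{k=1}^{\infty}e^{-\theta_i k}q_k^{(V_i)},
\end{align*}
with $q_k^{(U)},q_k^{(V_i)}$ exactly as in the statement. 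Setting $v:=-\big(\overline{K}_U(\alpha)+\sum_{i=1}^n\overline{K}_{V_i}(1)\big)$, which is $\ge 0$ since $\overline{K}_U(\alpha)\le 0$ and $\overline{K}_{V_i}(1)\le 0$, and defining $\mathcal{L}_{\mathbf{C}}(\boldsymbol{\theta})$ by the displayed formula in the statement, the last identity reads exactly $\log\mathcal{L}_{\mathbf{L}'}(\boldsymbol{\theta})=v\big(\mathcal{L}_{\mathbf{C}}(\boldsymbol{\theta})-1\big)$, i.e.\ the form \eqref{LaplaceCPP1}.

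There then remains the point that makes this a genuine representation rather than a formal rearrangement: one must check that $\mathcal{L}_{\mathbf{C}}$ is the Laplace transform of a probability law concentrated on $\mathbb{N}^n\setminus\{\mathbf{0}\}$. Expanding $\big(\sum_i\alpha_i e^{-\theta_i}\big)^k$ by the multinomial theorem exhibits $v\,\mathcal{L}_{\mathbf{C}}(\boldsymbol{\theta})$ as a series $\sum_{\mathbf{m}\in\mathbb{N}^n}c_{\mathbf{m}}\,e^{-\boldsymbol{\theta}^{\top}\mathbf{m}}$ with all $c_{\mathbf{m}}\ge 0$ (the $\alpha_i$ and the $q_k$'s being nonnegative) and with $c_{\mathbf{0}}=0$ (every contributing multi-index has $|\mathbf{m}|=k\ge 1$); evaluating at $\boldsymbol{\theta}=\mathbf{0}$ and using $\sum_{k\ge 1}\alpha^k q_k^{(U)}=\int(1-e^{-\alpha x})\,\nu_U(dx)=-\overline{K}_U(\alpha)$ together with the analogous identity for each $V_i$ gives $\sum_{\mathbf{m}}c_{\mathbf{m}}=v$, so $p_{\mathbf{m}}:=c_{\mathbf{m}}/v$ is a probability mass function on $\mathbb{N}^n\setminus\{\mathbf{0}\}$; this identifies the jump law of $\mathbf{C}$ and finishes the argument. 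I expect the main obstacle to be precisely this last bookkeeping — pinning down the normalising constant $v$ and verifying nonnegativity of all the coefficients — which is also where the purely-atomic, driftless structure of $\nu_U$ and $\nu_{V_i}$ is essential: a nonzero drift (or a Gaussian component) in $U$ would produce a term in $\sum_i\alpha_i e^{-\theta_i}$ (resp.\ in $w^2$) with the wrong sign, and the putative jump distribution would fail to be a probability measure.
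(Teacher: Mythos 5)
Your argument follows essentially the same route as the paper's proof: substitute $t_i=e^{-\theta_i}$ into the probability generating function, pass to kumulant functions, use the driftless L\'evy--Khintchine representation of $U$ and the $V_i$, expand the exponential in a power series, and match against the compound-Poisson form \eqref{LaplaceCPP1} to read off $v$ and $\mathcal{L}_{\mathbf{C}}$. Your closing check that the multinomial coefficients are nonnegative, vanish at $\mathbf{m}=\mathbf{0}$, and sum to $v$ --- so that $\mathcal{L}_{\mathbf{C}}$ genuinely is the Laplace transform of a probability law on $\mathbb{N}^n\setminus\{\mathbf{0}\}$ --- goes slightly beyond what the paper records, which stops at the formal identification, and is a worthwhile addition.
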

The above result is very important since we need the compound Poisson representation to efficiently simulate the trawl process, as we shall discuss  in Section \ref{simalgo}.\\ \ 

\noindent{\bf Multivariate negative binomial distribution}\\
\noindent
In situations where the count data are overdispersed and call for distributions other than the Poisson one, we can in principle choose from a great variety of discrete compound Poisson distributions. 
Motivated by our empirical study, see Section \ref{Section:Empirics}, and also the results in \cite{BNLSV2014}, we investigate the case of a negative binomial marginal law in more detail since this is one of the infinitely divisible distributions which can cope with overdispersion.

Recall that we say that a random variable $X$ has negative binomial law with parameters $\kappa>0, 0<p<1$, i.e.~$X\sim NB(\kappa, p)$ if its probability mass function is given by 
\begin{align*}
\mathbb{P}(X=x) = {\kappa+x-1 \choose x} p^x(1-p)^{\kappa}, \quad x \in \{0,1,\dots\}.
\end{align*}
Its probability generating function is given by
$G(t) =\E(t^X) = \left(1-\frac{p}{1-p}(t-1) \right)^{-\kappa}$.
Also, recall that a random variable $X$ is said to be gamma distributed with parameters $a,b>0$, i.e.~$X\sim\Gamma(a,b)$ if its probability density is given by $f(x) =\frac{b^a}{\Gamma(a)}x^{a-1}e^{-bx}$, for $x>0$.

Now, we set $U \sim \Gamma(\kappa,1)$ and $V_i \sim \Gamma(\kappa_i,\beta_i^{-1})$ in the Poisson mixture model. Then the probability generating function of $(X_1,\dots,X_n)$ is given by 
\begin{align*}
\mathbb{E} (t_1^{X_1}\cdots t_n^{X_n}) 
&= \left(1+  \sum_{i=1}^n \alpha_i(t_i-1)\right)^{-\kappa} \prod_{i=1}^n (1-\beta_i(t_i-1))^{-\kappa_i}.
\end{align*}
Next we are going to describe three examples, see  \citet[Example 5.3]{BNBlaesildSeshadri1992}, which lead to negative binomial marginals.
The first example, Example \ref{ex_ind}, covers the case of independent components, in the second example, Example \ref{ex_dep}, the fully dependent case is achieved through the presence of a common factor, and the third example, Example \ref{ex_indanddep}, combines the previous two cases by allowing for both a common (dependent) factor and additional independent components.
\begin{example}[Independence case]\label{ex_ind}
We set $\alpha_i \equiv 0$, for $i =1, \dots, n$ and choose $V_i \sim \Gamma(\kappa_i,1/\beta_i)$.
Then
$\mathbb{E} (t_1^{X_1}\cdots t_n^{X_n}) 
=  \prod_{i=1}^n (1-\beta_i(t_i-1))^{-\kappa_i}$, which implies that the $X_i$ are independent and satisfy $X_i \sim NB(\kappa_i, \beta_i/(1+\beta_i))$.
\end{example}
\begin{example}[Dependence through common factor]\label{ex_dep}
Choose $U \sim \Gamma(\kappa, 1)$ and $V_i \equiv 0$, for  $i =1, \dots, n$. Note that such a construction extends the bivariate case considered in \cite{ArbousKerrich1951}. Then
$\mathbb{E} (t_1^{X_1}\cdots t_n^{X_n}) = 
 \left(1+  \sum_{i=1}^n \alpha_i(t_i-1)\right)^{-\kappa}$, which implies that 
$X_i \sim NB(\kappa, \alpha_i/(1+\alpha_i))$
and also
$\sum_{i=1}^n X_i \sim NB\left(\kappa, \frac{\sum_{i=1}^n \alpha_i}{1+\sum_{i=1}^n \alpha_i} \right)$.
\end{example}
\begin{example}[Dependence through common factor and additional independent factors]\label{ex_indanddep}
Suppose that $U \sim \Gamma(\kappa, 1)$ and $V_i \sim \Gamma(\kappa_i,1/\alpha_i)$. 
Then one can write
$Z_i=\alpha_i(U + W_i)$, for 
$U \sim \Gamma(\kappa, 1)$ and $W_i \sim \Gamma(\kappa_i,1)$.
Then we can deduce that 
$\mathbb{E} (t_1^{X_1}\cdots t_n^{X_n}) 
= \left(1+  \sum_{i=1}^n \alpha_i(t_i-1)\right)^{-\kappa} \prod_{i=1}^n (1-\alpha_i(t_i-1))^{-\kappa_i}$.
Hence
$X_i \sim NB(\kappa+\kappa_i, \alpha_i/(1+\alpha_i))$.
\end{example}

\begin{remark}
The dependence concepts used here can be considered as Poisson mixtures of the first kind, see \cite{KarlisXekalaki2005}.
\end{remark}

We conclude this section by deriving the compound Poisson representation of the multivariate negative binomial distribution.

\begin{example}\label{Ex} As before, let
$U \sim \Gamma(\kappa,1), V_i \sim \Gamma(\kappa_i,1/\beta_i)$. 
Recall that for $X\sim \Gamma(a,b)$, $\E(e^{i\theta X}) = (1-i\theta/b)^{-a}$.
Hence  $\mathcal{L}_U(\theta)=(1+\theta)^{-\kappa}$, and $\mathcal{L}_V(\theta)= (1+\theta\beta_i)^{-\kappa_i}$. Also, 
$\overline K_U(\theta)=-\kappa\log(1+\theta)$, and $\overline K_V(\theta)= -\kappa_i\log(1+\theta\beta_i)$.
Then the rate  in the compound Poisson representation is given by 
$v= \kappa\log(1+\alpha) + \sum_{i=1}^n \kappa_i\log(1+\beta_i)$.
Further, we have
$\nu_U(dx) = \kappa x^{-1}e^{- x}dx$, and $\nu_{V_i}(dx)=\kappa_ix^{-1}e^{-x/\beta_i}dx$.
Then we can compute
\begin{align*}
q_k^{(U)} &=\frac{1}{k!} \int_{\R}e^{-\alpha x}   x^k\nu_U(dx)
= \frac{1}{k!} \int_{\R}e^{-\alpha x}   x^k \kappa x^{-1}e^{- x}dx
= \frac{\kappa}{k!}\int_{\R}e^{-(\alpha+1)x}x^{k-1}dx\\
&=\frac{\kappa}{k}(\alpha+1)^{-k},\\
q_k^{(V_i)} &= \int_{\R} \frac{x^k}{k!}e^{-x}\nu_{V_i}(dx)
=\int_{\R} \frac{x^k}{k!}e^{-x}\kappa_ix^{-1}e^{-x/\beta_i}dx
=\frac{\kappa_i}{k!}\int_{\R}e^{-(1+1/\beta_i)x}x^{k-1}dx\\
&= \frac{\kappa_i}{k}(1+1/\beta_i)^{-k}.
\end{align*}
Recall the series expansion of the logarithm: $\sum_{k=1}^{\infty}\frac{x^k}{k} = -\log(1-x)$,
for $x\leq 1$ and $x\not = 1$. Hence we conclude that 
\begin{align*}
\mathcal{L}_{{\bf C}}({\boldsymbol \theta}) &= \frac{1}{v}\left\{\sum_{k=1}^{\infty} \left(\sum_{i=1}^n \alpha_ie^{-\theta_i}\right)^k q_k^{(U)}
+\sum_{i=1}^n\sum_{k=1}^{\infty} e^{-\theta_i k} q_k^{(V_i)} \right\}\\
&= \frac{1}{v}\left\{\sum_{k=1}^{\infty} \left(\sum_{i=1}^n \alpha_ie^{-\theta_i}\right)^k \frac{\kappa}{k}(\alpha+1)^{-k}
+\sum_{i=1}^n\sum_{k=1}^{\infty} e^{-\theta_i k} \frac{\kappa_i}{k}(1+1/\beta_i)^{-k}\right\}\\
&=\frac{1}{v}\left\{
-\kappa\log\left(1- \sum_{i=1}^n \frac{\alpha_i}{\alpha+1}e^{-\theta_i}\right)
-\sum_{i=1}^n\kappa_i \log\left(1-e^{-\theta_i }(1+1/\beta_i)^{-1} \right) 
\right\}.
\end{align*}

I.e.~we can either represent the distribution by  one discrete compound Poisson distribution. Alternatively, we can write it as convolution of $n+1$ independent compound Poisson laws, where one component has the 
 multivariate logarithmic distribution with parameters $(p_1,\dots,p_n)$ for  $p_i =\alpha_i/(\alpha+1)$ as the jump size distribution, see e.g.~\cite{GB1967} and Remark \ref{MLSD} below. The remaining components  have a one-dimensional logarithmic distribution in one component of the jump sizes and the other components are set to zero, more precisely, we can write
 \begin{align*}
 {\bf L}'_1 = \sum_{j=1}^{N_1^{||}}{\bf C}_j^{||} + \sum_{i=1}^n \sum_{j=1}^{N_1^{(i)\bot }} (0,\dots, 0, C_j^{(i)\bot }, 0, \dots, 0)^{\top},
 \end{align*}
 where the component $C_j^{(i)\bot }$ is in the $i$th row in the $n$-dimensional column vector.
 The Poisson random variable $N_1^{||}$ has intensity $\kappa \log(1+\alpha)$ and the Poisson random variables  $N_1^{(i)\bot }$ have rates $\kappa_i \log(1+\beta_i)$. Further, $C_j^{(i)\bot }\sim \text{Log}(\frac{\beta_i}{1+\beta_i})$.
\end{example}

\begin{remark}\label{MLSD}
Recall the following properties of the multivariate logarithmic series distribution, see e.g.~\cite{GB1967}. 
${\bf C}^{||} \sim \text{Log}(p_1,\dots,p_n)$, where $0<p_i<1, p:=\sum_{i=1}^np_i <1$ if for ${\bf c}\in \N_0^n \setminus \{{\bf 0} \}$,
\begin{align*}
\mathbb{P}({\bf C}^{||}={\bf c})=\frac{\Gamma(c_1+\cdots+c_n)}{c_1! \cdots c_n!}\frac{p_1^{c_1}\cdots p_n^{c_n}}{[-\log(1-p)]}.
\end{align*}
Each component $C^{||(i)}$ follows the modified univariate logarithmic distribution with parameters $\tilde p_i = p_i/(1-p+p_i)$ and $\delta_i= \log(1-p+p_i)/\log(1-p)$, i.e.
\begin{align*}
\mathbb{P}(C^{||(i)} =c_i) = \left\{ \begin{array}{ll} \delta_i, & \text{for } c_i=0\\
(1-\delta_i) \frac{1}{c_i}\frac{\tilde p_i^{c_i}}{[-\log(1-\tilde p_i)]}, & \text{for } c_i \in \N. \end{array} \right.
\end{align*}
\end{remark}

\section{Simulation and inference}\label{Section:SimInf}
We will now turn our attention to simulation and inference for trawl processes.  
We will start off by deriving a simulation algorithm which is based on the compound-Poisson-type representation of MIVTs. This will enable us to simulate sample paths from our new class of processes, which can be used for model-based parametric bootstrapping  in parametric inference. The inference procedure itself will be based on the (generalised) method of moments, since the cumulants of the multivariate trawl process are readily available.  
\subsection{Simulation algorithm}\label{simalgo}
First of all, we discuss how to simulate a univariate MIVT process. For each component 
 $i\in \{1,\dots,n\}$, we have the following representation. 
\begin{align*}
Y_t^{(i)} &= L^{(i)}(A^{(i)}_t) =X_{0,t}^{(i)}+ X_t^{(i)},
\end{align*}
where
$X_{0,t}^{(i)}=L^{(i)}(\{(x, s): s \leq 0, 0 \leq  x \leq  d^{(i)}(s-t)\})$ and $X_t^{(i)}= L^{(i)}(\{(x, s): 0<s \leq t, 0 \leq  x \leq  d^{(i)}(s-t)\})$, for a  
trawl function $d^{(i)}$. 

We would like to argue that the term $X_{0,t}^{(i)}$ is asymptotically negligible in the sense that it converges to zero  as $t\to\infty$, which will allow us to concentrate on the term $X_t^{(i)}$ in the following. Indeed, this conjecture holds as the following proposition shows.
\begin{proposition}\label{Prop_ConvofIniValue}
For a 
trawl function $d^{(i)}$, we have that $X_{0,t}^{(i)}=L^{(i)}(\{(x, s): s \leq 0, 0 \leq  x \leq  d^{(i)}(s-t)\})\to 0$ in probability, as $t\to \infty$.
\end{proposition}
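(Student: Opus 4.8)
The plan is to show something stronger than asymptotic vanishing in probability: that $X_{0,t}^{(i)}$ equals $0$ with probability tending to $1$ as $t\to\infty$. The intuition is that $X_{0,t}^{(i)}$ is the Lévy basis $L^{(i)}$ evaluated over a region that shrinks to a set of zero Lebesgue measure, and a homogeneous integer‑valued (hence compound Poisson) Lévy basis puts no mass at all on such a region with overwhelming probability.

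First I would isolate the relevant region. Write
\begin{align*}
B_t := \{(x,s):\ s\leq 0,\ 0\leq x\leq d^{(i)}(s-t)\},
\end{align*}
so that $X_{0,t}^{(i)}=L^{(i)}(B_t)=\int_{\R^n\times B_t} y_i\, N(d{\bf y},dx,ds)$. The set $B_t$ is Borel (it is the region below the graph of the continuous map $s\mapsto d^{(i)}(s-t)$ intersected with $\{s\le 0\}$), and by the substitution $u=s-t$ together with \eqref{dstar},
\begin{align*}
\leb(B_t)=\int_{-\infty}^0 d^{(i)}(s-t)\,ds=\int_{-\infty}^{-t} d^{(i)}(u)\,du=:c_t .
\end{align*}
Since $\int_{-\infty}^0 d^{(i)}(u)\,du=\leb(A^{(i)})<\infty$, the tail $c_t$ of this convergent integral satisfies $c_t\to 0$ as $t\to\infty$.

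Next I would bound $\Prob(X_{0,t}^{(i)}\neq 0)$. Under Assumption (A1) the \Levy measure $\nu$ is concentrated on $\mathbb{Z}^n\setminus\{{\bf 0}\}$, where $\|{\bf y}\|\geq 1$, so $\int_{\R^n}\min(1,\|{\bf y}\|)\,\nu(d{\bf y})=\nu(\mathbb{Z}^n\setminus\{{\bf 0}\})$; hence $\nu$ is a finite measure, with total mass $v:=\nu(\mathbb{Z}^n\setminus\{{\bf 0}\})<\infty$, which is precisely the compound‑Poisson setting already imposed on ${\bf L}'$. Consequently $N(\R^n\times B_t)$ is Poisson distributed with mean $\mu(\R^n\times B_t)=v\,\leb(B_t)=v\,c_t$, and, since the Poisson integral $L^{(i)}(B_t)$ vanishes whenever $N$ has no atom in $\R^n\times B_t$,
\begin{align*}
\Prob\big(X_{0,t}^{(i)}\neq 0\big)\ \leq\ \Prob\big(N(\R^n\times B_t)\geq 1\big)\ =\ 1-e^{-v c_t}.
\end{align*}
Therefore, for every $\varepsilon>0$,
\begin{align*}
\Prob\big(|X_{0,t}^{(i)}|>\varepsilon\big)\ \leq\ \Prob\big(X_{0,t}^{(i)}\neq 0\big)\ \leq\ 1-e^{-v c_t},
\end{align*}
which tends to $0$ as $t\to\infty$ because $c_t\to 0$; this gives $X_{0,t}^{(i)}\to 0$ in probability. (An equivalent route, avoiding the Poisson‑count bound, is via characteristic functions: by homogeneity of the \Levy basis, $X_{0,t}^{(i)}=L^{(i)}(B_t)$ is infinitely divisible with $\E(e^{i\theta X_{0,t}^{(i)}})=\exp\big(c_t\,\text{C}_{L'^{(i)}}(\theta)\big)\to 1$ for each $\theta\in\R$, so $X_{0,t}^{(i)}$ converges in distribution to the constant $0$ and hence in probability.)

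I do not expect a serious obstacle here. The only two points that need care are the change of variables yielding $\leb(B_t)=c_t$ (so that the tail of the convergent integral $\int_{-\infty}^0 d^{(i)}$ drives everything) and the observation that $\nu$ is automatically a finite measure under (A1) on the integer lattice — equivalently, finiteness is already built in through the compound‑Poisson specification of ${\bf L}'$ — which is exactly what makes the Poisson‑mass estimate on $\R^n\times B_t$ legitimate.
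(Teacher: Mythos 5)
Your proof is correct, and it takes a genuinely different route from the paper's. The paper disposes of the claim in two lines by computing that $\leb(B_t)\to 0$ and then invoking, as a cited black box, the countable additivity of \Levy bases (for $A_n\downarrow\emptyset$ with bounded Lebesgue measure, $L^{(i)}(A_n)\to 0$ in probability). You instead exploit the specific structure of the driving noise: under (A1) the \Levy measure sits on $\mathbb{Z}^n\setminus\{{\bf 0}\}$, where $\min(1,\|{\bf y}\|)=1$, so $\nu$ is finite and $N(\R^n\times B_t)$ is an honest Poisson count with mean $v\,c_t$; the event $\{X_{0,t}^{(i)}\neq 0\}$ is contained in $\{N(\R^n\times B_t)\geq 1\}$, giving the explicit bound $\Prob(X_{0,t}^{(i)}\neq 0)\leq 1-e^{-v c_t}\to 0$. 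This is self-contained, quantitative, and actually stronger than the stated claim (the term is exactly zero with probability tending to one, which is also what justifies the burn-in heuristic in the simulation algorithm). It also sidesteps a small imprecision in the paper's argument: the family $B_t$ is nested in $t$ only for monotonic trawl functions, so the ``$A_n\downarrow\emptyset$'' formulation does not literally apply to, say, the seasonal trawl, whereas your argument needs only $\leb(B_t)\to 0$. The price is that your main bound leans on the finiteness of $\nu$ (i.e.\ the compound Poisson setting); your parenthetical characteristic-function argument recovers the same generality as the paper's, since $\E(e^{i\theta X_{0,t}^{(i)}})=\exp(c_t\,\mathrm{C}_{L'^{(i)}}(\theta))\to 1$ requires only that the cumulant function be finite, and convergence in law to a constant upgrades to convergence in probability.
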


Hence, we will focus on simulating $X_t^{(i)}$ and will work with a burn-in period in the simulation such that the effect of $X_{0,t}^{(i)}$ is negligible.

A realisation of ${\bf L}$ consists of a countable set $R$ of points $({\bf y},x,s)$ in
$\N_0^n\setminus\{ {\bf 0}\} \times [0,1]\times \R$.
When we project the point pattern to the time axis, we obtain the arrival times of a Poisson process $N_t$ with intensity $v= \nu(\R^n)$. The corresponding arrival times are denoted by $t_1, \dots, t_{N_t}$ and we associate uniform heights $U_1, \dots, U_{N_t}$ with them, see \cite{BNLSV2014} for a detailed discussion in the univariate case. 
So as soon as we have specified the jump size distribution of the ${\bf C}$, we can use the representation  
\begin{align*}
X_t^{(i)}=  \sum_{j=1}^{N_t}C_j^{(i)}\mathbb{I}_{\{U_j \leq d^{(i)}(t_j-t)\}},
\end{align*}
 to simulate each component.

\begin{algorithm} \label{GenSimAlgo}
In this algorithm we suppress the dependence on the superscript ${(i)}$ and describe how to simulate from the one-dimensional  of components of the form
\begin{align}\label{CPP}
X_t:=\sum_{j=1}^{N_t}C_j\mathbb{I}_{\{U_j \leq d(t_j-t)\}}
\end{align}
We want to simulate $X$ on a $\Delta$-grid of $[0,t]$, where $\Delta >0$, i.e.~we want to find $(X_0,X_{1\Delta}, \dots, X_{\lfloor t/\Delta \rfloor\Delta})$.
\begin{enumerate}
\item Generate a realisation $n_t$ of the the Poisson random variable $N_t$ with mean $v t$ for $v>0$.
\item Generate the pairs $(t_j, U_j)_{j\in \{1,\dots,n_t\}}$ where the series $(t_1,\dots, t_{n_t})$ consists of realisations of ordered i.i.d.~uniform random variables on $[0,t]$. The $(U_1,\dots,U_{n_t})$ are i.i.d.~and uniformly distributed on $[0,1]$ and independent of the arrival  times $(t_1,\dots, t_{n_t})$.
\item Simulate the i.i.d.~jump sizes $C_1,\dots, C_{n_t}$.
\item Construct  the trawl process on a $\Delta$-grid, where $\Delta >0$, by setting
$X_0=0$ and 
\begin{align}\label{simstep}
X_{k\Delta}:=\sum_{j=1}^{\text{card}\{t_l:t_l\leq k\Delta \}}C_j\mathbb{I}_{\{U_j \leq d(t_j-k\Delta)\}}, \quad k=1, \dots, \lfloor t/\Delta \rfloor.
\end{align}
\end{enumerate}
\end{algorithm}
\begin{remark}
Note that the condition in the indicator function in \eqref{simstep} can be expressed in a vectorised form, which allows a fast implementation of the simulation algorithm, see Section \ref{SimTrawlAppendix} for details. 
\end{remark}

In order to generate samples from the multivariate process, it is easiest to split the compound Poisson seed into dependent and independent components and simulate the components separately as we shall describe in more detail in the following example.
\begin{example}
Suppose we want to simulate from the multivariate trawl process with negative binomial marginal law as described in Example \ref{ex_indanddep}.
Then we split each component into a dependent and an independent component as follows:
\begin{align*}
X_t^{(i)}= \sum_{j=1}^{N_t}C_j^{(i)}\mathbb{I}_{\{U_j \leq d^{(i)}(t_j-t)\}} =
\sum_{j=1}^{N_t^{||}}C_j^{||(i)}\mathbb{I}_{\{U_j \leq d^{(i)}(t_j-t)\}} 
+\sum_{j=1}^{N_t^{(i)\bot }}C_j^{\bot (i)}\mathbb{I}_{\{U_j \leq d^{(i)}(t_j-t)\}},
\end{align*}
where
${\bf C}^{||} =(C_j^{||(1)}, \dots,C_j^{||(n)})^{\top} \sim \text{Log}(p_1,\dots,p_n)$, where $p_i=\alpha_i/(1+\alpha)$, $C_j^{\bot (i)} \sim \text{Log}(p_i)$. Note that ${\bf C}^{||}$ and  $C_j^{\bot (1)}, \dots, C_j^{\bot (n)}$ are independent for all $j$ and the intensities of the independent Poisson processes  $N^{||}, N^{(1)\bot },\dots,N^{(n)\bot }$ are given by $\kappa \log(1+\alpha)$ and $\kappa_1 \log(1+\alpha_1), \dots, \kappa_n \log(1+\alpha_n)$, respectively.
Then we can use the algorithm above to simulate each component separately.
\end{example}

\begin{remark}
Since the above scheme ignores the initial value $X_{0,t}$, it is advisable to work with a burn-in period in a practical implementation. In the situation when  the support of the trawl function $d$ is bounded, then an exact simulation of the trawl process is possible since its initial value can be generated precisely.
\end{remark}

\subsection{Inference}\label{InfSection}
We propose to estimate the model parameters using a two stage equation-by-equation procedure, where the marginal parameters for each component are estimated first, and the parameter determining the dependence are estimated in a second step. Recent research on inference in multivariate models, see e.g.~\cite{Joe2005} and, more recently, \cite{FrancqZakoian2016}, has highlighted that such a procedure is very powerful in a   high-dimensional set-up.

Motivated by the results in \cite{BNLSV2014}, we propose to work with the (generalised) method of moments to infer the model parameters since the cumulants are readily available and the procedure works well in our simulation study. Full maximumlikelihood estimation is numerically rather intractable, whereas composite likelihood methods based on pairwise observations also seem to work well in the univariate case, as  ongoing work not reported here, reveals.  

In the following, we shall assume that we have decided on a  parametric model for the multivariate trawl process with trawl   functions $d^{(i)}$. 

{\bf Step 1:} 
We can  use the time series for each component to estimate the marginal parameters. We will estimate the parameter of $d^{(i)}$ in Step a) and the ones of $L'^{(i)}$ in Step b).

{\bf a)}
Recall that for each component we 
have the following representation for the autocorrelation function:
\begin{align*}
r_{ii}(h)=\Cor\left(L^{(i)}(A_t^{(i)}), L^{(i)}(A_{t+h}^{(i)})\right)
= \frac{\leb(A^{(i)}\cap A^{(i)}_h)}{\leb(A^{(i)})}=\frac{R_{ii}(h)}{\leb(A^{(i)})}, \quad \text{for }i = 1,\dots,n.
\end{align*}
I.e.~the autocorrelation function only depends on the parameters of the trawl function $d^{(i)}$. These parameters  can hence be estimated 
by using the method of moments or generalised method of moments (depending on the model specification) by matching the empirical and the theoretical autocorrelation function. This will, in particular, provide us with an estimate of $\leb(A^{(i)})$.

{\bf b)}
In a second step, we can then estimate the parameters determining the marginal distribution of $L^{'(i)}$, again using a method of moments,  by using a sufficient number of cumulants of the 
observed trawl process. Note that 
the cumulant function for an individual component has the form 
\begin{align}\label{CumTrawl}
C(\xi \ddagger Y_t^{(i)}) = \leb(A^{(i)}) C(\xi \ddagger L^{'(i)}),  \quad \text{ for } i =1, \dots, n,.
\end{align}
I.e.~the cumulants of the trawl process can be easily derived. 
We denote by $\kappa_k$ the $k$th cumulant for $k \in \mathbb{N}$. Then we have that 
\begin{align*}
\kappa_k(Y_t^{(i)}) = \leb(A^{(i)})\  \kappa_k(L^{'(i)}), \quad \text{ for } i =1, \dots, n.
\end{align*}
I.e.~as long as the parameters are identified through the cumulants, we can estimated them after having estimated the trawl parameters by setting 
\begin{align*}
\widehat{\kappa_k}(L^{'(i)}) = \frac{\kappa_k^e(Y_t^{(i)})}{\widehat{\leb(A^{(i)})}}  , \quad \text{ for } i =1, \dots, n,
\end{align*}
where $\kappa_k^e$ stands for the corresponding empirical $k$th cumulant. We then just need to solve  the equations for the corresponding parameters.  If a direct matching does not work, then one can use the generalised method of moments. 

{\bf Step 2:} After the marginal parameters have been identified, we turn to estimating the parameters describing the dependence. 
We note that as soon as the trawl parameters have been estimated, the corresponding autocorrelators can be computed. I.e.~we then obtain estimates $\widehat{R}_{ij}(h)=\leb(A^{(i)}\cap A^{(j)}_h)$.  
Then we get that
\begin{align*}
\kappa_{i,j} = \frac{\rho_{ij}^e(h)}{\widehat{R}_{ij}(h)}, \quad \text{ for } i\not = j, i,j \in \{1, \dots, n\},
\end{align*}
where $\rho_{ij}^e(h)$ denote the empirical autocovariance function between the $i$th and $j$th component evaluated at lag $h$. In fact, it will be sufficient to set $h=0$ when we estimate the parameters $\kappa_{i,j}$.
Note here that while we can clearly estimate the pairwise covariance parameter  $\kappa_{i,j}$ using this method, depending on the parametric model chosen, there might be more than one parameter describing the dependence structure. As such, not all parameters might be identified through this procedure in which case additional moment conditions need to be considered. However, since this scenario did not arise in the model specifications we studied in relation to our empirical work, we shall refer this aspect to future research. 

Let us briefly comment on the validity of this estimation method: According to  \cite{FuchsStelzer2013} multivariate mixed moving-average
processes are mixing as long as they exist. This result implies that our stationary multivariate  trawl processes are mixing and   hence also weakly mixing and ergodic.
Hence we can deduce that   moment-based estimation methods are consistent, see e.g. \cite{Matyas1999}.

In order to construct confidence bounds for the various parameters, we proceed by implementing a parametric bootstrap procedure, where we plug in the estimated parameters into the model specification, simulate from the model as described in the previous section, and then  report the corresponding 95\% confidence bounds. 
\subsubsection{Simulation study}\label{SimStudy}
In order to check how well the inference procedure works in finite samples, we conduct a Monte Carlo study, where we choose the model setting which describes our empirical data well, see Section \ref{Section:Empirics}. To this end, we simulate  samples consisting of 3960 observations each from the bivariate version of the negative binomial model with common factor as described in Example \ref{ex_dep}. The distribution of the corresponding \Levy seed is determined by three parameters: $\alpha_1, \alpha_2$ and $\kappa$. In addition, we choose an exponential trawl function for both components, which are parametrised by  $\lambda_1$ and $\lambda_2$, respectively.  The parameters are set to their empirical counterparts, see Table 
\ref{table_par} below. Some of the technical details regarding the simulation study can be found in Section \ref{sim} in the Appendix.

We draw 5000 samples from the model using the simulation algorithm described above and estimate the parameters for each sample using the method of moments. In Figure \ref{boxplots}, we present the boxplots for the estimates for each of the five parameters. The true values are highlighted by a vertical red line. We observe that all five estimates center around the true values. 
\begin{figure}[htbp]
  \includegraphics[width=0.55\textwidth, height=0.7\textheight,angle=270]{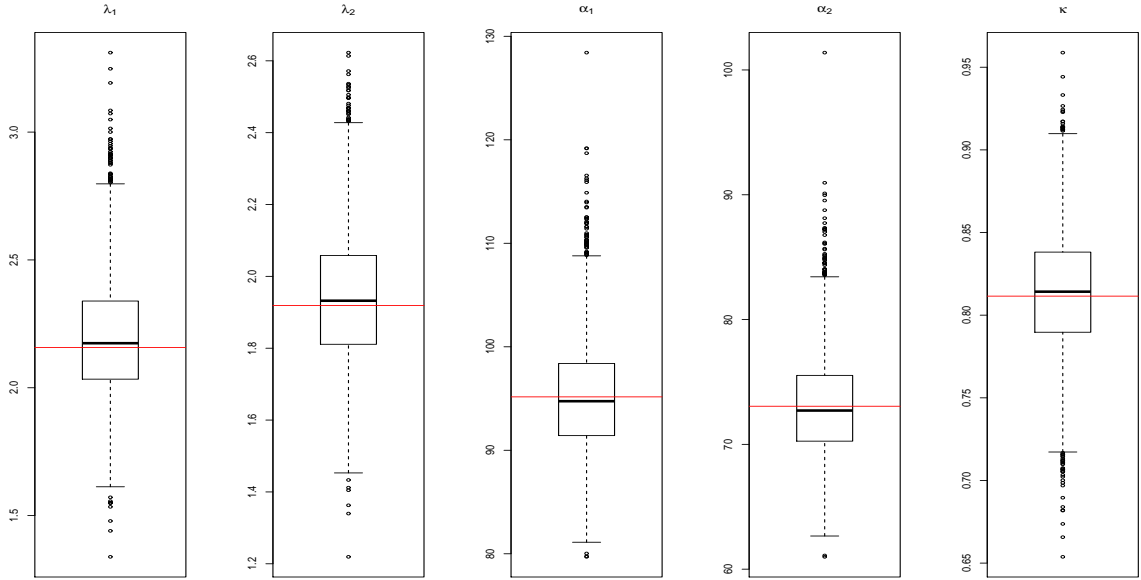}
  \caption{Boxplots of the five parameter estimates from a bivariate trawl model with exponential trawl function and negative binomial \Levy seed, see Example \ref{ex_dep}. The results are based on 5000 Monte Carlo runs, where each sample contains 3960 observations. The true values are indicated by a red vertical line. \label{boxplots}}
\end{figure}

\section{Empirical illustration}\label{Section:Empirics}
In this section, we apply our new modelling framework to high frequency financial data. More precisely, we study limit order book data from the database LOBSTER\footnote{LOBSTER: Limit Order Book System - The Efficient Reconstructor at Humboldt Universit\"{a}t zu Berlin, Germany. http://LOBSTER.wiwi.hu-berlin.de}.

We have downloaded the limit order book data for 
Bank of America (ticker: BAC) for one day (21st April 2016).
We are interested in investigating the joint behaviour between the time series of the number of newly submitted limit orders versus the number of fully deleted limit orders.  Note that the trading day starts at 9:30am and ends at 16:00. For our analysis, we discard the first and last 30 minutes of the data which typically have a peculiar (non-stationary) structure due to the effects caused by the beginning and the end of trading. As such we analyse data for a time period of 5.5 hours. We split this time period  into intervals of length five seconds, resulting in 3960 intervals. In each interval we count the number of newly submitted limit orders and the ones which have been fully deleted.

\begin{table}[htbp]
\center
\begin{tabular}{ccccccc}
\hline
 & Min & 1st Quartile &  Median &  Mean & 3rd Quartile & Max \\
\hline
  No.~of new submissions &   0&    7 &   13 &   34.06 &   28 & 646 \\
 No.~of full deletions &     0&   5.75 &   12 &   29.13 &   27.25 &  571  \\
\hline
\end{tabular}
\caption{Summary statistics of the  
BAC data from 21st April 2016 
based on   intervals of length 
five seconds.  Also, we find that the correlation between the two time series is equal to  0.984.
\label{table_subsum}}
\end{table}

 The summary statistics of these two count series are provided in 
Table \ref{table_subsum}. Moreover, Figure \ref{TSPlot} depicts the corresponding time series plot, which also includes a picture of the difference of the two time series (in the middle), and Figure \ref{Empirical_Histogram}  presents histograms of the joint and the marginal distribution of the data.

 \begin{figure}[htbp] 
  \includegraphics[width=0.55\textwidth, height=0.7\textheight,angle=270]{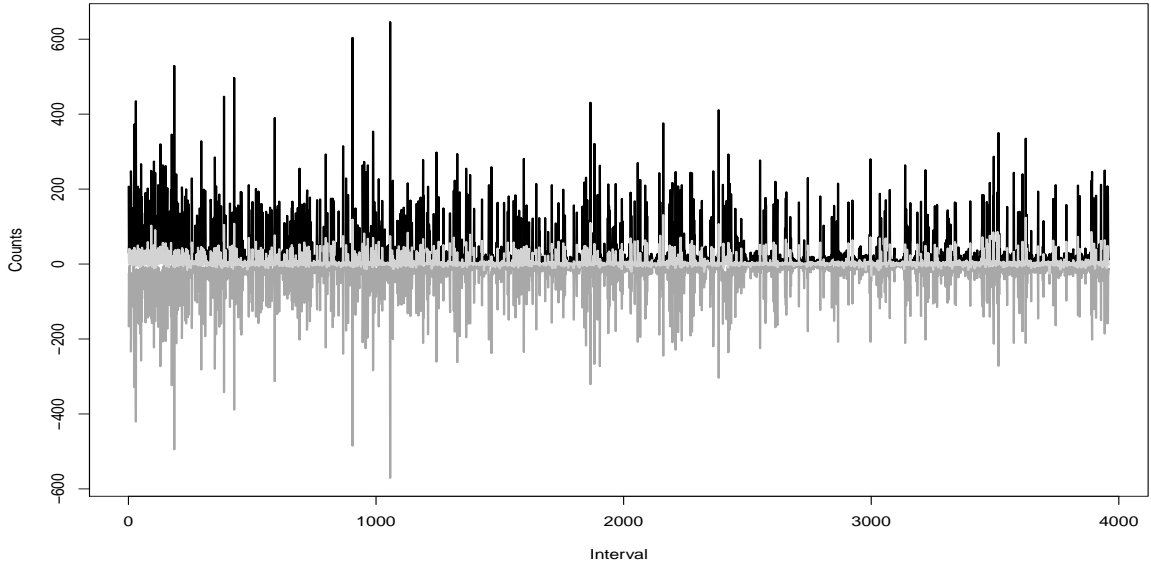}
  \caption{Time series plots of the BAC data from 21st April 2016 
based on   intervals of length 
five seconds.  Black (top): number of submitted orders; light grey (middle): number of  submitted - fully deleted orders; dark grey (bottom):  - number of  fully deleted orders. \label{TSPlot}}
\end{figure}

We observe that there is a very strong correlation and co-movement between the two time series, which confirms the well-known fact that, for highly traded stocks such as  BAC,  the majority of newly submitted limit orders gets deleted rather than executed.

\begin{figure}[htbp] \centering
  \includegraphics[scale=0.4,angle=270]{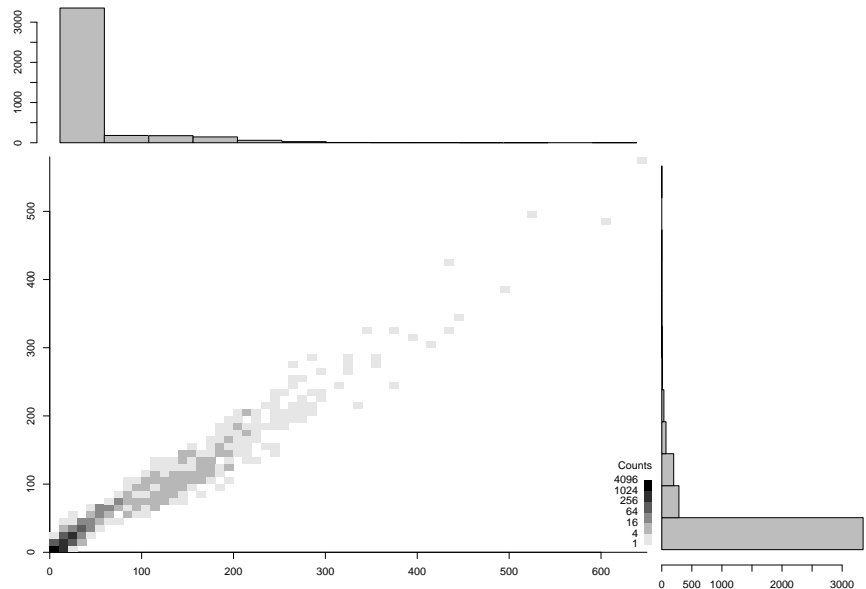}
  \caption{Histograms depicting the joint distribution and the marginal distributions of the new submissions (top) and the full cancellations (right). \label{Empirical_Histogram}}
\end{figure}

 \begin{figure}[htbp]
  \subfigure[ACF of new submissions]{
\includegraphics[scale=0.25,angle=270]{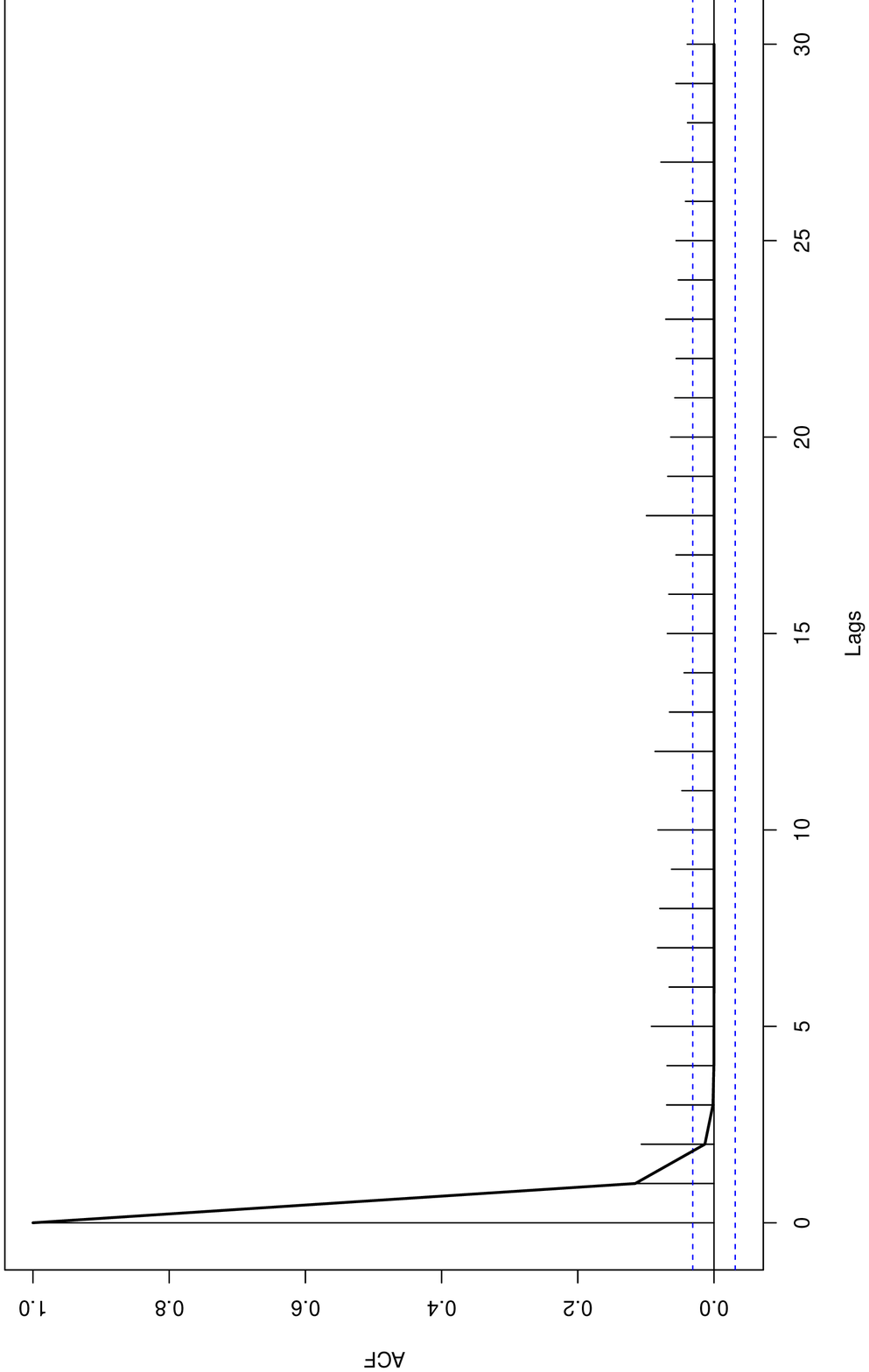}
 }
 \subfigure[ACF of full cancellations.]{
\includegraphics[scale=0.25,angle=270]{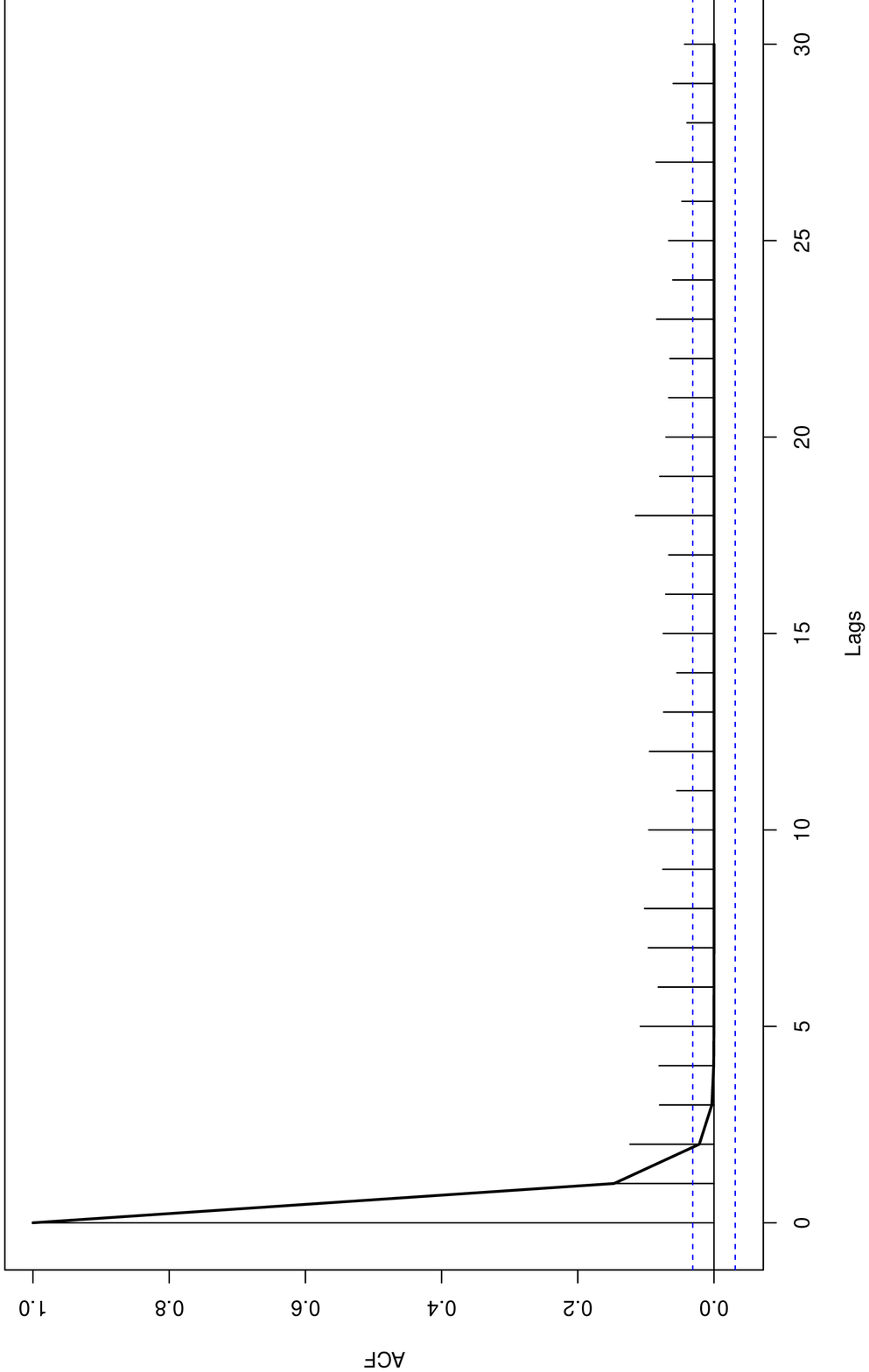}
 }\caption{Empirical autocorrelation function (ACF) of the number of newly submitted and fully deleted limit orders, respectively. The solid black line  shows the estimated exponential trawl function in both cases. \label{Fig-ACF}}
 \end{figure}

Since the empirical autocorrelation function decays rather quickly for both time series,  see  Figure \ref{Fig-ACF},  we fit an exponential trawl function in both cases and get a good fit. Based on the estimated trawl parameters, we compute $\leb(A^{(1)})$, $\leb(A^{(2)})$, and $\leb(A^{(1)}\cap A^{(2)})$. 
Next, we estimate the parameters $\alpha_1, \alpha_2$ from the marginal law and finally infer $\kappa$ from the empirical cross-covariance. All parameter estimates are summarised in Table  \ref{table_par}. In addition, we provide the corresponding 95\% confidence intervals, which are based on a parametric bootstrap, where we simulated 5000 samples from the estimated model using the estimated parameters as the plug-in values.

  \begin{table}[h]
\center
\begin{tabular}{c|cc|ccccc}
\hline
  & $\lambda_1$ & $\lambda_2$ 
&  $\alpha_1$ & $\alpha_2$ 
 & $\kappa$  \\ \hline
Estimates & 2.157
& 1.919
&95.161& 73.055
& 0.812\\ 
CB & {\footnotesize(1.771, 2.673)} & {\footnotesize(1.597, 2.322)} 
& {\footnotesize(85.321, 106.147)} & {\footnotesize(65.797, 81.222) }
& {\footnotesize(0.741, 0.885)} \\
\hline
\end{tabular}
\caption{Estimated parameters and estimates of the 95\% confidence bounds (CB)  from the moment-based estimates. The CB estimates have been computed using a model-based bootstrap, where 5,000 bootstrapped samples were drawn. \label{table_par}}
\end{table}

\begin{figure}[htbp]
  \includegraphics[scale=0.5,angle=270]{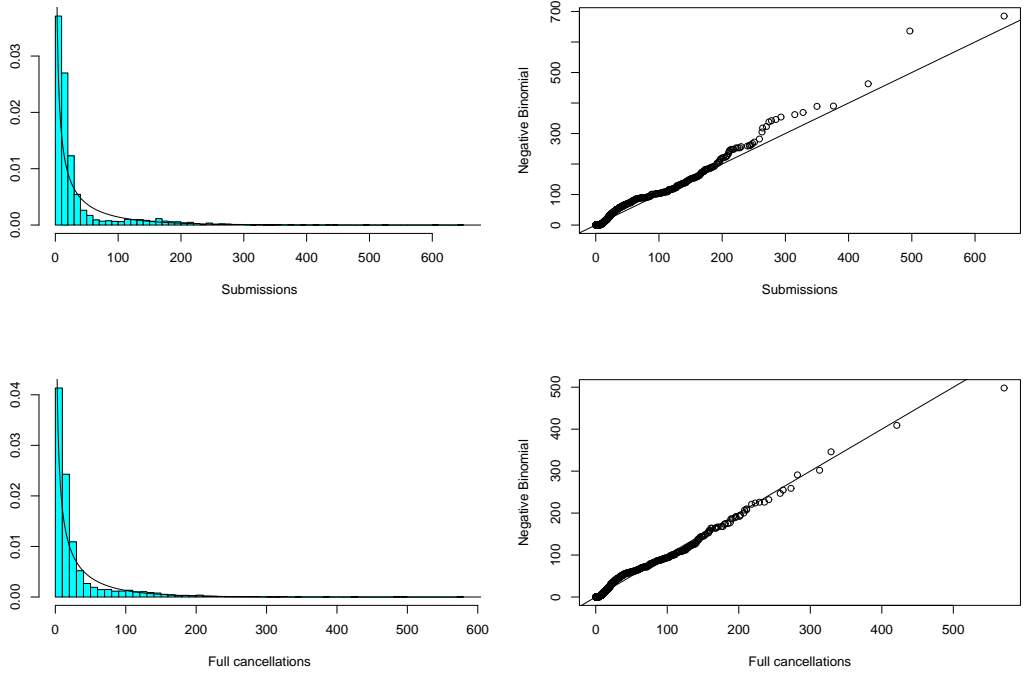}
  \caption{Empirical and fitted densities and quantile-quantile plots of the negative binomial marginal law for the new submissions (top) and the full cancellations (bottom). \label{GoodnessOfFit}}
\end{figure}

\begin{figure}[htbp] \centering
  \includegraphics[scale=0.4,angle=270]{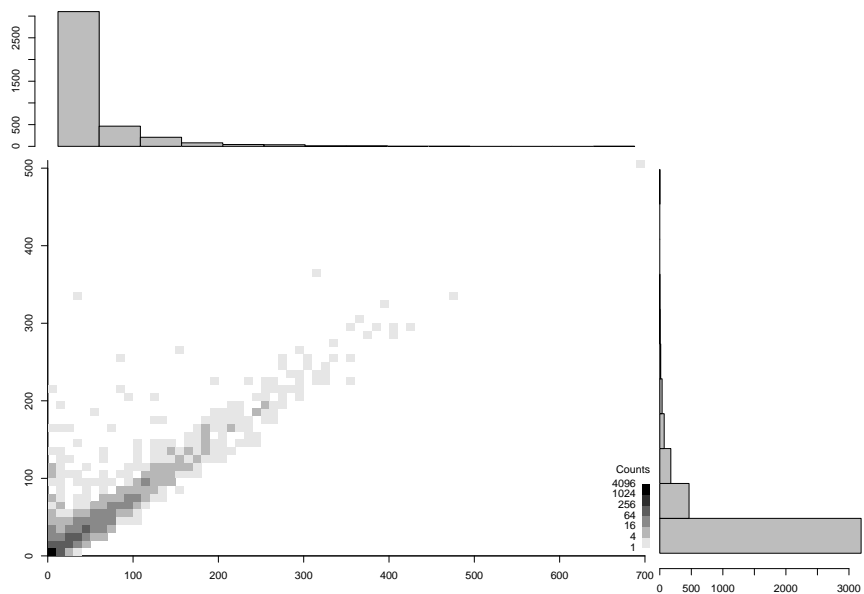}
  \caption{Histograms depicting the joint distribution and the marginal distributions of one path of the simulated bivariate time series in our bootstrap procedure.   \label{Sim_Histogram}}
\end{figure}

 \begin{figure}[htbp] 
  \includegraphics[width=0.55\textwidth, height=0.7\textheight,angle=270]{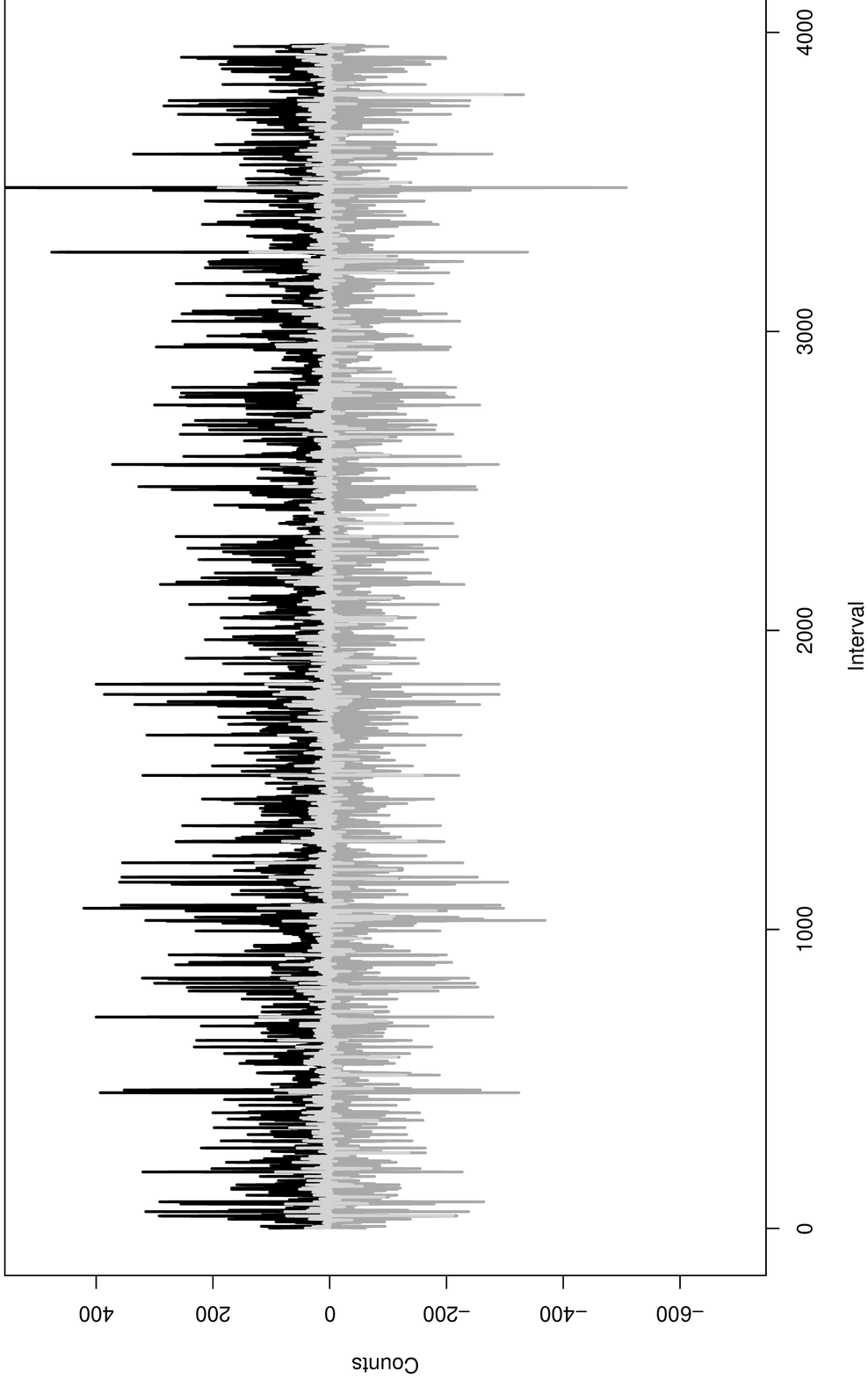}
  \caption{Time series plots of one simulated sample path. Black (top): first component; light grey (middle): number of  first - second component; dark grey (bottom):  - second component. This is the same path as the one used to generate Figure \ref{Sim_Histogram}.\label{TSPlotSim}}
\end{figure}

 In addition to checking the goodness-of-fit of the trawl function, see Figure \ref{Fig-ACF}, we also need to assess whether the parametric model for the bivariate \Levy seed is appropriate.  To this end, we first check the marginal fit, which corresponds to a univariate negative binomial law for each component.
Figure \ref{GoodnessOfFit} shows the empirical and the estimated probability densities and the corresponding quantile-quantile plots. While the fit seems to be acceptable overall, we  note that the fit appears to be better for the time series of the cancelled orders, where the quantile-quantile-plot is closer to a straight line, than in the case of the newly submitted orders, where we observe a mildly wiggly line.
Finally, we investigate the goodness-of-fit of the joint law. For this, we draw the bivariate law from one of our bootstrap samples and the corresponding univariate laws, see Figure \ref{Sim_Histogram}. We observe that the histogram of the simulated  joint law resembles the one from the empirical data well, cf.~Figure \ref{Empirical_Histogram}. Also a visual inspection of the simulated sample paths, see Figure \ref{TSPlotSim} for one example, shows that the empirical data and the simulated data have indeed very similar features, which supports our hypothesis that a bivariate trawl process can describe the number of order submissions and cancellations in a limit order book well.

\section{Conclusion}\label{Section:conclusion}
We propose a new modelling framework for multivariate time series of counts, which is based on so-called multivariate integer-valued trawl (MVIT) processes. 
Such processes are
highly analytically tractable and enjoy useful properties, such as 
 stationarity, infinitely divisibility, ergodicity and a  mixing  property. A variety of serial dependence patterns, including short and long memory, as well as all discrete infinitely divisible marginal distributions can be achieved within this novel framework.  
In this article, we focused in particular on various specifications of a multivariate infinitely divisible negative binomial distribution, since its univariate counterpart has been widely used in  empirical work.
Moreover, since the MVIT process is defined in continuous time, it can be applied to non-equidistant and asynchronous data, which increases its broad applicability.
Further contributions of this article include a simulation algorithm for MVIT processes and a suitable inference procedure which is based on the two-stage equation-by-equation approach, where the parameters describing the univariate marginal distributions  are estimated in the first step, followed by the estimation of the dependence parameters in the second step. A simulation study confirms the effectiveness of this inference method in finite samples. The estimation itself is based on the generalised method of moments and suitable confidence bounds are obtained through a parametric bootstrap procedure. 
In an empirical illustration, a bivariate version of an MVIT process has been used to successfully describe the relationship between the number of order submissions and cancellations in a limit order book.

\begin{appendix} 
\section{Proofs}\label{proofs}
\begin{proof}[Proof of Proposition \ref{CharFctTrawl}]
Using the properties of the \Levy basis, we immediately obtain that 
\begin{align*}
&\mathbb{E}(\exp(i{\boldsymbol \theta}^{\top}{\bf Y}_t)) =
\exp\left(\int_{\R^n\times [0,1]\times \R} \left\{\exp\left(i\sum_{j=1}^n\theta_j\mathbf{I}_{A^{(j)}}(x,s-t)y_j\right)-1\right\} \nu(d{\bf y})  dx ds\right).
\end{align*}
The expression for the characteristic function can be further simplified by using
a partition $S =\{ S_1, \dots, S_{2^n-1}\}$ of  $A^{\cup,n }:=\cup_{i=1}^n A^{(i)}$, see  \cite{NVG2015}.
More precisely, we have that 
 \begin{align}\label{SP}
 A^{\cup,n} = \bigcup_{k=1}^{n}\bigcup_{\substack{1\leq i_1, \dots, i_k \leq n: \\ i_{\nu}\not = i_{\mu}, \text{ for } \nu \not = \mu}} \left(\bigcap_{l=1}^kA^{(i_l)}\setminus \bigcup_{\substack{1\leq j\leq n,\\ j\not \in\{i_1, \dots, i_k\}}} A^{(j)}\right).
 \end{align}
Note that 
\begin{align}\label{thetaY}
{\boldsymbol \theta}^{\top}{\bf Y}_t = \sum_{j=1}^n \theta_j L^{(j)}(A^{(j)}_t) = \sum_{j=1}^n \theta_j \sum_{k: S_k\subset A^{(j)}} L^{(j)}(S_k) = \sum_{k=1}^{2^n-1}   \sum_{\substack{1\leq j\leq n:\\  A^{(j)}\supset S_k}} \theta_j L^{(j)}(S_k).
\end{align}
Finally, combining \eqref{thetaY} with the  representation \eqref{SP} and using the fact that a \Levy basis is  independently scattered, we obtain the result.
\end{proof}

\begin{proof}[Proof of Proposition \ref{JointLaw}] The joint law is given by 
\begin{align*}
&P(X_1=x_1, \dots, X_n=x_n) \\
&=
\int_{(0,\infty)^{n+1}}P(X_1=x_1, \dots, X_n=x_n|U=u, V_1=v_1, \dots, V_n=v_n)\\
& \qquad \qquad \qquad \qquad \qquad \qquad \qquad \qquad \cdot f_U(u)f_{V_1}(v_1)\cdots f_{V_n}(v_n)du dv_1 \cdots dv_n\\
&= \int_{(0,\infty)^{n+1}} \prod_{i=1}^n e^{-\alpha_i u + v_i}\frac{(\alpha_i u + v_i)^{x_i}}{x_i!} f_U(u)f_{V_i}(v_i)du dv_i\\
&=
\int_{(0,\infty)^{n+1}}  f_U(u)\prod_{i=1}^n e^{-\alpha_i u + v_i}\frac{1}{x_i!} \sum_{j_i=0}^{x_i}{x_i \choose j_i}\alpha_i^{j_i}u^{j_i}v_i^{x_i-j_i} f_{V_i}(v_i)du dv_i\\
&= \frac{1}{x_1!\cdots x_n!} \sum_{j_1=0}^{x_1}\cdots \sum_{j_n=0}^{x_n} {x_1 \choose j_1}\cdots{x_n \choose j_n}\alpha_1^{j_1}\cdots \alpha_j^{j_n} \E(U^{j_1+\cdots+j_n}e^{-(\alpha_1 + \cdots + \alpha_n)U}) \\
& \qquad \qquad \qquad \qquad\qquad \qquad \qquad \qquad\cdot \prod_{k=1}^n \E(V_k^{ x_k-j_k}e^{-V_k}).
\end{align*}
\end{proof}

\begin{proof}[Proof of Proposition \ref{CPRep}]
Let $M_U$ and $M_{V_i}$ denote the moment generating functions of $U$ and $V_i$, respectively.
According to \citet[equation (5.1)]{BNBlaesildSeshadri1992}, the probability generating function of $(X_1, \dots, X_n)$ is given by
\begin{align*}
G(t_1, \dots, t_n) &= E(t_1^{X_1} \cdots t_n^{X_n})= M_U\left(\sum_{i=1}^n \alpha_i(t_i-1)\right)\prod_{i=1}^n M_{V_i}(t_i-1).
\end{align*}
Hence, the corresponding Laplace transform for positive $\boldsymbol \theta$ is given by 
\begin{align}\label{LaplaceCPP2}
\mathcal{L}(\theta_1,\dots,\theta_n)&= G(e^{-\theta_1}, \dots,e^{-\theta_n})
=
M_U\left(\sum_{i=1}^n \alpha_i(e^{-\theta_i}-1)\right)\prod_{i=1}^n M_{V_i}(e^{-\theta_i}-1).
\end{align}

The aim is to find $v$ and $\mathcal{L}_{{\bf C}}({\boldsymbol \theta})$ by equating \eqref{LaplaceCPP1} and \eqref{LaplaceCPP2}.
Using the relation between the Laplace and the moment generating function, we deduce that 

\begin{align*}
\mathcal{L}(\theta_1,\dots,\theta_n)&= 
M_U\left(\sum_{i=1}^n \alpha_i(e^{-\theta_i}-1)\right)\prod_{i=1}^n M_{V_i}(e^{-\theta_i}-1)\\
&= \mathcal{L}_U\left(\sum_{i=1}^n \alpha_i(1-e^{-\theta_i})\right)\prod_{i=1}^n \mathcal{L}_{V_i}(1-e^{-\theta_i})\\
&= \exp\left(\log\mathcal{L}_U\left(\sum_{i=1}^n \alpha_i(1-e^{-\theta_i})\right)+\sum_{i=1}^n \log\mathcal{L}_{V_i}(1-e^{-\theta_i})\right).
\end{align*}
We use the notation $\overline K = \log \mathcal{L}$ for the so-called \emph{kumulant function}.
Since $U$ is a subordinator without drift, we have that 
\begin{align*}
&\overline K_U\left(\sum_{i=1}^n \alpha_i(1-e^{-\theta_i})\right)
= \int_{\R}\left(e^{-\sum_{i=1}^n \alpha_i(1-e^{-\theta_i})x}-1 \right)\nu_U(dx)\\
&=\int_{\R}\left(e^{-\sum_{i=1}^n \alpha_i x} -e^{-\sum_{i=1}^n \alpha_i x} + e^{\sum_{i=1}^n \alpha_i(1-e^{-\theta_i})x}-1 \right)\nu_U(dx)
\\
&=\int_{\R}\left(e^{-\sum_{i=1}^n \alpha_i x} -1 \right)\nu_U(dx)
+\int_{\R}e^{-\sum_{i=1}^n \alpha_i x} \left(e^{\sum_{i=1}^n \alpha_ie^{-\theta_i}x}-1 \right)\nu_U(dx).
\end{align*}
Note that
\begin{align*}
e^{\sum_{i=1}^n \alpha_ie^{-\theta_i}x}-1 &= \sum_{k=1}^{\infty}\frac{1}{k}\left(\sum_{i=1}^n \alpha_ie^{-\theta_i}x\right)^k
=\sum_{k=1}^{\infty}\frac{1}{k!}\left(\sum_{i=1}^n \alpha_ie^{-\theta_i}\right)^k x^k.
\end{align*}
We set $\alpha := \sum_{i=1}^n \alpha_i$. Then
\begin{align*}
\int_{\R}e^{-\sum_{i=1}^n \alpha_i x} \left(e^{\sum_{i=1}^n \alpha_ie^{-\theta_i}x}-1 \right)\nu_U(dx) 
&=
\int_{\R}e^{-\alpha x}  \sum_{k=1}^{\infty}\frac{1}{k!}\left(\sum_{i=1}^n \alpha_ie^{-\theta_i}\right)^k x^k\nu_U(dx)\\
&=  \sum_{k=1}^{\infty} \left(\sum_{i=1}^n \alpha_ie^{-\theta_i}\right)^k \underbrace{ \frac{1}{k!} \int_{\R}e^{-\alpha x}   x^k\nu_U(dx)}_{:=q_k^{(U)}}.
\end{align*}
I.e.
\begin{align*}
\overline K_U\left(\sum_{i=1}^n \alpha_i(1-e^{-\theta_i})\right)
&=\int_{\R}\left(e^{-\alpha x} -1 \right)\nu_U(dx)
+  \sum_{k=1}^{\infty} \left(\sum_{i=1}^n \alpha_ie^{-\theta_i}\right)^k q_k^{(U)}\\
&= \overline K_U(\alpha) +  \sum_{k=1}^{\infty} \left(\sum_{i=1}^n \alpha_ie^{-\theta_i}\right)^k q_k^{(U)}.
\end{align*}
Similarly, 
\begin{align*}
\sum_{i=1}^n K_{V_i}(1-e^{-\theta_i})&=\sum_{i=1}^n \left(\overline K_{V_i}(1)
+  \sum_{k=1}^{\infty} e^{-\theta_i k} q_k^{(V_i)}\right), \text{ where  } q_k^{(V_i)} = \int_{\R} \frac{x^k}{k!}e^{-x}\nu_{V_i}(dx).
\end{align*}
So, overall we have
\begin{align*}
\overline K_{\bf X}({\boldsymbol \theta}) &=\log \mathcal{L}(\theta_1, \dots, \theta_n) \\
&= \left( \overline K_U(\alpha) + \sum_{i=1}^n \overline K_{V_i}(1)\right) +  \sum_{k=1}^{\infty} \left(\sum_{i=1}^n \alpha_ie^{-\theta_i}\right)^k q_k^{(U)}
+\sum_{i=1}^n\sum_{k=1}^{\infty} e^{-\theta_i k} q_k^{(V_i)}\\
&= -v + v \mathcal{L}_{{\bf C}}({\boldsymbol \theta}), 
\end{align*}
if and only if
\begin{align*}
v &= -\left( \overline K_U(\alpha) + \sum_{i=1}^n \overline K_{V_i}(1)\right) ,\\
\mathcal{L}_{{\bf C}}({\boldsymbol \theta}) &= \frac{1}{v}\left\{\sum_{k=1}^{\infty} \left(\sum_{i=1}^n \alpha_ie^{-\theta_i}\right)^k q_k^{(U)}
+\sum_{i=1}^n\sum_{k=1}^{\infty} e^{-\theta_i k} q_k^{(V_i)} \right\}.
\end{align*}
\end{proof}
\begin{proof}[Proof of Proposition \ref{Prop_ConvofIniValue}]
The requirement that $\leb(A^{(i)})<\infty$ implies  that $\leb(\{(x, s): s \leq 0, 0 \leq  x \leq  d(s-t)\})\to 0$ as $t\to \infty$.
Since a  \Levy basis  is countably additive (in the sense that for any  sequence $A_n\downarrow \emptyset$ of Borel sets with bounded Lebesgue measure, $L^{(i)}(A_n)\to 0$ in probability as $n\to \infty$, see \cite{BNBV2011}), we can deduce that 
 $X_{0,t}^{(i)} \to 0$ 
in  probability as $t\to \infty$.
\end{proof}

\section{Details regarding the simulation study}\label{sim}
In Section \ref{SimStudy}, we simulate from a bivariate negative binomial trawl process, where both components have an exponential trawl function and their joint law is given by the bivariate negative binomial distribution as described in Example \ref{ex_dep}. In the simulation of the trawl process, we work with the compound-Poisson-type representation \eqref{CPP} and specify the jump size distribution as the bivariate logarithmic series distribution (BLSD) as in Example \ref{Ex}.

\subsection{Simulating from the bivariate logarithmic series distribution }
First of all, we describe how we can generate random samples ${\bf C}=(C_1,C_2)^{\top}$ from the BLSD with parameters $p_1,p_2$. The algorithm is based on the idea that we can simulate $C_1$ from the modified logarithmic series distribution (ModLSD) (with parameters $\tilde p_1=p_1/(1-p_2)$ and $\delta_1=\log(1-p_2)/\log(1-p_1-p_2)$) in a first step, and then $C_2$ can be simulated from the conditional distribution, given $C_1$, see e.g.~\cite{KempLoukas1978}. We note here, that if $C_1\equiv 0$, then $C_2|C_1$ follows the logarithmic distribution (with parameter $p_2$), and when $C_1>0$, then $C_2|C_1$ follows the negative binomial distribution with parameters $C_1$ and $p_2$, see  e.g.~\cite{KK1990}. 
We describe the simulation algorithm for the BLSD using pseudo code tailored to the ${\tt R}$ language. Throughout the section we use the abbreviation rv for random variable.

\begin{algorithm}
[Simulation from the bivariate logarithmic series distribution]\label{BLSD_SimAlgo}
\begin{algorithmic}[1]  \Statex
\State {\tt library(VGAM)} \Comment{Load the VGAM package in R.}
\Statex 

\Function{Sim-BLSD}{$N,p_1,p_2$}
 \State $\tilde p_1 \gets p_1/(1-p_2)$ \Comment{Calculate the parameters of the modified LSD.}
\State $\delta_1 \gets \log(1-p_2)/\log(1-p_1-p_2)$ 
\State   $L \gets {\tt rlog}(N,p_1)$ \Comment{Simulate $N$ i.i.d.~Log($p_1$) rvs.} 
\State $B \gets {\tt rbinom}(N,1,1-\delta_1)$ \Comment{Simulate $N$ i.i.d.~Bernoulli($1-\delta_1$) rvs.} 
\State $C_1 \gets L * B$ \Comment{Generate $N$ i.i.d.~ModLog($\tilde p_1, \delta_1$) rvs.}
\State $C_2 \gets {\tt numeric}(N)$
\For{$i$ in $1:N$}
$c_1 \gets C_1[i]$
 \If{$c_1==0$}
\State $C_2[i] \gets {\tt rlog}(1,p_2)$ \Comment{Simulate a Log($p_2$) rv.}\EndIf
 \If{$c_1>0$}
\State $C_2[i] \gets {\tt rnbinom}(1,size=c_1,prob=1-p_2)$ 
\Comment{Simulate a NB($c_1,p_2$) rv.}
\EndIf
 \EndFor
\State $C \gets {\tt cbind}(C_1,C_2)$ \Comment{Combine the component vectors to an $N\times 2$ matrix.}
\State \textbf{return} $C$
\EndFunction
\end{algorithmic}
\end{algorithm}

\subsection{Simulating the bivariate trawl process}\label{SimTrawlAppendix}
Next, we provide the pseudo code tailored to the ${\tt R}$ language which has been used to simulate the bivariate trawl process with exponential trawl function and bivariate negative binomial law (as in Example \ref{ex_dep}). Here we are using the same notation as in the general description of  Algorithm \ref{GenSimAlgo}. In addition,  
we denote by $bi$ the length of the burn-in period. I.e.~we will simulate the process over the time interval $[0,t]$ for $t=T+bi$ and then remove the initial burnin period, i.e.~we return the paths over the interval $(bi,bi+T]$.    

\begin{algorithm}
[Simulation from the bivariate trawl process]\label{Trawl_SimAlgo}
\begin{algorithmic}[1] 
\Statex
\State {\tt library(VGAM)} \Comment{Load the VGAM {\tt R} package and the function {\tt Sim-BLSD} defined above.} 
 \Function{Expfct}{($x,\lambda$)} \Comment{Choose an exponential trawl function.}
\State \textbf{return} $\exp(\lambda * x)$
\EndFunction
\Statex
\Procedure{Sim-Trawl}{$\Delta,T,bi,\lambda_1, \lambda_2, \alpha_1, \alpha_2, \kappa$}
\State 
$v \gets \kappa *\log(1+\alpha_1+\alpha_2)$ \Comment{Intensity of the driving Poisson process.}
\State
$p_1 \gets \alpha_1/(\alpha_1+\alpha_2+1);  
p_2 \gets \alpha_2/(\alpha_1+\alpha_2+1)$ \Comment{Parameters in the BLSD}
\State $N_t \gets {\tt rpois}(1,v*t)$ \Comment{Draw the number of jumps in $[0,t]$ from Pois($vt$).}
  \State  
  $\tau \gets {\tt sort}({\tt runif}(N_t, \min=0, \max=t))$ \Comment{ simulate the $N_t$ jump times from the ordered uniform distribution on $[0,t]$.}
 \State 
  $h \gets {\tt runif}(N_t, \min=0,\max=1)$ \Comment{Simulate the $N_t$ jump heights of the abstract spatial parameter of the Poisson basis from the uniform distribution on $[0,1]$.}
  \State 
  $m \gets {\tt Sim-BLSD}(N_t,p_1,p_2)$  \Comment{Draw the jump marks from the BLSD}
  \State  
  $C_1 \gets m[,1]; C_2 \gets m[,2]$ \Comment{Assign the jump marks to $C_1$ and $C_2$.}
  
  \Statex
   \State \Comment{Determine the number of  jumps  up to each grid point $k \Delta$ and
  store them in the  vector $V$.}
  \State $V \gets {\tt vector(mode = "numeric", length = floor}(t/\Delta))$ 
  \State $c \gets{\tt table(cut(jumptimes,seq}(0,t,1),{\tt include.lowest = TRUE))}$
  \State
 $ V[1]<- {\tt as.integer}(c[1])$
  \For{$k$ in $2:{\tt floor}(t/\Delta)$}
  \State
    $V[k] \gets V[k-1]+{\tt as.integer}(c[k])$
    \EndFor
  
  \Statex
  \For{$i$ in $1:2$}
  \Comment{Simulate the $i$th trawl process}
\State $TP_i \gets {\tt vector(mode = "numeric", length = floor}(t/\Delta))$
  \For{$k$ in $1:{\tt floor}(t/\Delta)$}
  
    \State $N_{k\Delta} \gets V[k]$  \Comment{Number of jumps  until time $k\Delta$.}
     \If{$N_{k\Delta}>0$}
      \State $d \gets k*\Delta - \tau[1:N_{k\Delta}]$ \Comment{Compute the time differences between $k\Delta$ and each jump time up to $k\Delta$.}
       \State $cond_i \gets 1-{\tt ceiling}(h[1:N_{k\Delta}]-{\tt Expfct}(-d,\lambda_i))$ \Comment{Check which points are in the trawl.}
      \State $TP_i[k] \gets {\tt sum}(cond_i*C_i[1:N_{k\Delta}])$ \Comment{Sum up the marks  in the trawl.}
      
     \EndIf
   
  \EndFor
  
\EndFor
 
\Statex 
\State $b_1 \gets bi/\Delta, b_2 = bi/\Delta+T/\Delta$ 
 \For{$i$ in $1:2$}
 \State   
  $TrawlProcess_i \gets TP_i[(b_1+1):b_2]$  \Comment{Cut off burn-in period.}
  \EndFor
\EndProcedure
\end{algorithmic}
\end{algorithm}

\end{appendix}

\section*{Acknowledgement} 
A.~E.~D.~Veraart acknowledges financial support by 
a Marie Curie
FP7 Career Integration Grant within the 7th European Union Framework Programme.

\bibliographystyle{agsm}
\bibliography{MultTrawlBib}

\begin{appendix}

\end{appendix}
\end{document}